\newtheorem{theorem}{Theorem}[section]
\newtheorem{corollary}[theorem]{Corollary}
\newtheorem{lemma}[theorem]{Lemma}
\newtheorem{proposition}[theorem]{Proposition}
\newtheorem{remark}[theorem]{Remark}
\numberwithin{equation}{section}
\newcommand\R{{\mathbb R}}
\newcommand\C{{\mathbb C}}
\newcommand\X{{\R^d}}
\newcommand\N{{\mathbb N}}
\renewcommand\L{{\mathcal L}}
\newcommand\F{{\mathcal F}}
\newcommand\K{{\mathcal K}}
\newcommand\M{{\mathcal M}}
\newcommand\B{{\mathcal B}}
\newcommand\Bc{\B_{\mathrm{b}}}
\newcommand\Ga{\Gamma}
\newcommand\ga{\gamma}
\newcommand\La{\Lambda}
\newcommand\la{\lambda}
\newcommand\eps{\varepsilon}
\newcommand\hL{\hat{L}}
\renewcommand\a{\alpha}
\newcommand\aC{{\a C}}
\begin{document}

\title[Functional evolutions for death-immigration dynamics]{Functional
evolutions for homogeneous stationary death-immigration
spatial dynamics}

\author{D. Finkelshtein}
\address{Institute of Mathematics, National Academy of Sciences of
Ukraine, Kyiv, Ukraine}
\curraddr{}
\email{fdl@imath.kiev.ua}
\thanks{Author would like to thank Prof. Dr. Yuri Kondratiev for useful
discussions. This work was supported by DFG through SFB-701,
Bielefeld University, Germany.}

\subjclass[2000]{Primary 82C22, 82C21; Secondary 60K35}
\date{DD/MM/2011}
\keywords{Continuous systems, Markov evolution, spatial
birth-death-migration processes, correlation functions, evolution
equations, Surgailis process}

\begin{abstract}
We discover death-immigration non-equilibrium stochastic dynamics in
the continuum also known as the Surgailis process. Explicit
expression for the correlation functions is presented. Dynamics of
states and their generating functionals are studied. Ergodic
properties for the evolutions are considered.
\end{abstract}

\maketitle

\section{Introduction}

Complex systems theory is a growing interdisciplinary area with a
very broad spectrum of motivations and applications. One may
characterize complex systems by properties as diversity and
individuality of components, localization of interactions among
components, and the outcomes of interactions used for replication or
enhancement of components. In the study of these systems,  proper
language and techniques are delivered by the interacting particle
models which form a rich and powerful direction in modern stochastic
and infinite dimensional analysis.  Interacting particle systems
have  a wide use as models in condensed matter physics, chemical
kinetics, population biology, ecology, sociology and economics.

Mathematical realizations of such models may be considered as a
dynamics of collections of points in proper spaces. The possible
positions of points may be fixed due to the structure of space, e.g.
dynamics on graphs, or, in particular, on lattices. Another area of
models connects with free positions of points in the continuum, say,
in the Euclidean space $\X$. However, as was shown in statistical
physics, many empirical effects, such as phase transitions, are
impossible in systems with a finite number of points. Due to this,
one can consider infinite point systems as a mathematical
approximation for realistic systems with a high number of elements.
The connection with the reality, where infinite systems are absent,
is given by the restriction of the study to locally finite systems
(configurations) which have only finite number of elements in any
finite volume.

Depending on applications, the points of such a system may be
interpreted as molecules in physics, plants in ecology, animals in
biology, infected people in medicine, companies in economics, market
agents in finance, and so on. For study stochastic dynamics of such
systems we may consider different mechanisms of (random) evolutions
of their points. Existing points may disappear from the
configuration that is naturally called `death'. Each existing point
may change own position due to some moving or hop; this mechanism
traditionally is called `emigration'. Each existing point may
produce a new one, that is called `birth'. There exists also another
possibility for appearing a new element in the configuration coming
from outside; this is called `immigration'. Mathematically, the
random evolution of the system is described by a heuristic Markov
generator which includes parts corresponding to different mechanisms
above.

Rigorous mathematical results concerning stochastic dynamics of
configurations in the continuum have not very reach history. One of
the pioneering work in this area was \cite{HS1978}. Special class of
models introduced therein have been recently studied in
\cite{GK2006, GK2008}. We mentioned also
\cite{Pre1975,Pen2008,Qi2008}, and references therein. During the
last decade a functional approach for studying of the stochastic
dynamics above was discovered. It was considered the evolutional
equations connected with considered stochastic dynamics, namely,
equations
 on states of systems  and their correlation functions, equations on generating functionals
and so on. Studying this evolutional equations yields not only
existence (in different senses) of dynamics but their qualitative
and quantitative properties also. For general description of this
approach see, e.g., \cite{FKO2009,KKM2008}, and for particular
models see, e.g.,
\cite{KKZ2006,FKKZ2010,FKK2010,FKK2009,Finkelshtein2009}.

In the present paper we consider one of the simplest model, where
only independent (constant) death and immigration appear. The
corresponding stochastic process is the well-known Surgailis process
\cite{Sur1983,Sur1984,KLR2008}. For this model we find explicit
expression for correlation functions that gives us a way to improve
general results as well as to obtain new ones. The structure of the
paper is the following. We describe the model and present necessary
knowledge on configuration space techniques in Section 2. Section 3
is devoted to the evolutions of correlation functions and measures
(states)\ of the system. The~ergodic properties of the dynamics as
well as evolution of the generating functionals are presented in
Section 4. Finally, Section 5 deals with the so-called dynamics of
quasi-observables.

We also note that the main results obtained in this work may be
generalized to death and immigration rates whose are independent of
other points of a configuration, however, they may depend on the
position of each point and time. We will consider this case of
non-homogeneous non-stationary death-immigration process in a
forthcoming publication.

\section{Description of model}

The simplest economic model in  the description of spatial dynamics is the model of free
development when particles (which we may
interpret, for instance, as companies on the
market) appears independently without any influence of existing ones. On the other language,
they migrate from the outside without any motivation due to
situation inside the system.
Of course, companies on real market never have infinite life time. We consider model with global regulation. This means
that any points of configuration has exponentially distributed (with
some positive parameter $m$) random life time. Hence, again a death
(bankruptcy) appears due to ``request'' from the outside.

The state space of this model is the space $\Ga=\Ga_\X$ of all locally finite subsets
(configurations) in $\X$:
\[
\Ga :=\left\{  \ga \subset \X \bigm| |\ga_\La |<\infty, \text{ for
all } \La \in \B_c(\X)\right\}.
\]
Here $\ga_\La=\ga\cap\La$, $|\cdot|$ means cardinality of a set,
$\B_{c}(\X)$ denote the system of all bounded Borel sets in $\X$. We
consider the $\sigma $-algebra $ \B(\Ga )$ as the smallest $\sigma
$-algebra for which all the mappings $N_\La :\Ga \rightarrow
\N_0:=\N\cup\{0\}$, $N_\La (\ga ):=|\ga _ \La |$ are measurable for
all $\La\in\B_c(\X)$. For every $\La \in \B_c(\X)$ one can define a
projection $p_\La :\Ga \rightarrow \Ga _\La :=\{\ga\in\Ga\mid
\ga\subset\La\}$; $p_\La (\ga ):=\ga _\La $ and w.r.t. this
projections $\Ga $ is the projective limit of the spaces $\{\Ga _\La
\}_{\La \in \B_c(\X)}$. One can consider also the $\sigma $-algebra
$ \B_\La(\Ga )$ as the smallest $\sigma $-algebra for which all the
mappings $N_{\La'} :\Ga \rightarrow \N_0$ are measurable for all
$\La'\in\B_c(\X)$, $\La'\subset\La$.

On $\Ga $ we consider the set of a cylinder functions
$\mathcal{F}L^0(\Ga )$, i.e. the set of all measurable function $F$
on $\bigl(\Ga,\B(\Ga)\bigr)$ which are measurable w.r.t. $\B_\La
(\Ga )$ for some $\La \in \B_c(\X)$. These functions are
characterized by the following relation: $F(\ga )=F\upharpoonright
_{\Ga _\La }(\ga _\La )$.

Let $ \mathcal{M}_{\mathrm{fm}}^1(\Ga )$ be the set of all
probability measures $\mu $ on $\bigl( \Ga, \B(\Ga) \bigr)$ which
have finite local moments of all orders, i.e. $\int_\Ga |\ga _\La
|^n\mu (d\ga )<+\infty $ for all $\La \in \B_c(\X)$ and $n\in \N_0$.

To describe a (pre-)generator of a dynamics above we consider for fixed $m>0$, $\sigma\geq0$ and for any
$F\in\mathcal{F}L^0(\Ga )$ the following expression
\begin{equation}\label{SurgGen}
\left( L F\right) \left( \gamma \right)
=m\sum_{x\in\ga}\bigl[F(\ga\setminus x ) - F(\ga) \bigr] + \sigma
\int_{\mathbb{R} ^{d}}\left[ F\left( \gamma \cup x\right) -F\left(
\gamma \right) \right] dx,
\end{equation}
which is well-defined since, by the definition of
$\mathcal{F}L^0(\Ga )$,  there exists $\La\in\Bc(\X)$ such that
$F(\gamma\setminus x)=F(\gamma)$ for any $x\in\gamma_{\Lambda^c}$
and $F(\gamma\cup x)=F(\gamma)$ for any $x\in\Lambda^c$; therefore,
sum and integral in \eqref{SurgGen} are finite. Stress that $L$ is
the generator of the (non-equilibrium) Surgailis process, see
\cite{Sur1983,Sur1984,KLR2008}.

We consider now the space of finite configurations on $\X$. The space of $n$-point configuration is
\[
\Ga ^{(n)}:=\left\{ \left. \eta \subset \X\right| \,|\eta
|=n\right\} ,\quad n\in \N_0.
\]
As a set,
$\Ga^{(n)}$ is equivalent to the symmetrization of
\[
\widetilde{(\X)^n} = \left\{ \left. (x_1,\ldots ,x_n)\in
(\X)^n\right| \,x_k\neq x_l\,\,\mathrm{if} \,\,k\neq l\right\}.
\]
Hence, $\Ga_{0}^{(n)}$ inherits the structure of an $n\cdot
d$-dimensional manifold. Applying this we can define Borel $\sigma
$-algebra $\B(\Ga _{0}^{(n)})$. Also one can consider a measure
$\mathrm{m}^{(n)}$ as image of product $\mathrm{m}^{\otimes n}$ of
Lebesgue measures $d\mathrm{m}(x)=dx$ on $\bigl(\X, \B(\X)\bigr)$.

The space of finite configurations
\[
\Ga _{0}:=\bigsqcup_{n\in \N_0}\Ga _{0}^{(n)}
\]
has structure of disjoint union, therefore, one can define the Borel
$\sigma $-algebra $\B (\Ga _0)$. A set $B\in \B (\Ga _0)$ is called
bounded if there exists a $\La \in \B_c(\X)$ and an $N\in \N$ such
that $B\subset \bigsqcup_{n=0}^N\Ga _\La ^{(n)}$, where
$\Ga_\La^{(n)}:=\bigl\{ \eta \subset \La\bigm| |\eta |=n\bigr\}$.

We will use also the following two classes of functions on $\Ga_0$:
$L_{\mathrm{ls}}^0(\Ga _0)$ is the set of all measurable functions
on $\Ga_0$ which have a local support, i.e. $G\in
L_{\mathrm{ls}}^0(\Ga _0)$ if there exists $\La \in \B_c(\X)$ such
that $G\upharpoonright_{\Ga _0\setminus \Ga _\La }=0$;
$B_{\mathrm{bs}}(\Ga _0)$ is the set of bounded measurable functions
with bounded support: $G\upharpoonright_{\Ga _0\setminus B}=0$ for
some bounded $B\in \B(\Ga_0)$.

The
Lebesgue---Poisson measure $\la_{z} $ on $\bigl(\Ga_0, \B(\Ga_0)\bigr)$ is defined as
\begin{equation}
\la _{z} :=\sum_{n=0}^\infty \frac {z^{n}}{n!}\mathrm{m} ^{(n)}.\label{defLPm}
\end{equation}
Here $z>0$ is the so called activity parameter. The restriction of
$\la _{z} $ to $\Ga _\La $ will be also denoted by $\la _{z} $. Let
$\la$ be the Lebesgue-Poisson measure on $\Ga_{0}$ (and $\Ga_\La$)
with activity parameter equal to 1.

The Poisson measure $\pi _{z} $ on $\bigl(\Ga ,\B(\Ga )\bigr)$ is
given as the projective limit of the family of measures $\{\pi _{z}
^\La \}_{\La \in \B_c(\X)}$, where $\pi _{z} ^\La $ is the measure
on $\Ga _\La $ defined by $\pi _{z} ^\La :=e^{-z \mathrm{m} (\La
)}\la _{z}$. Again, we will omit index in the case $z=1$.

The following mapping between functions on $\Ga _0$, e.g.
$L_{\mathrm{ls}}^0(\Ga _0)$, and functions on $\Ga $, e.g.
$\mathcal{F}L^{0}(\Ga )$, plays an  important role in our further
considerations:
\begin{equation}
KG(\ga ):=\sum_{\eta \Subset \ga }G(\eta ), \quad \ga \in \Ga,
\label{KT3.15}
\end{equation}
where $G\in L_{\mathrm{ls}}^0(\Ga _0)$, see, e.g.,
\cite{Len1975,Len1975a,KK2002}. The summation in the latter
expression is extend over all finite subconfigurations of $\ga ,$ in
symbols $\eta \Subset \ga $. The mapping $K$ is linear, positivity
preserving, and invertible, with
\begin{equation}
K^{-1}F(\eta ):=\sum_{\xi \subset \eta }(-1)^{|\eta \setminus \xi
|}F(\xi ),\quad \eta \in \Ga _0.\label{k-1trans}
\end{equation}

We consider now a mapping $\hL G:=K^{-1}L KG$ which is well-defined
on functions $G\in L_{\mathrm{ls}}^0(\Ga _0)$. By, e.g.,
\cite{FKO2009}, we have
\begin{equation}
\bigl( \hL G\bigr) \left( \eta \right) =-m|\eta| G(\eta)+\sigma
\int_{\mathbb{R}^{d}}G\left( \eta \cup x\right) dx\label{exprLP}.
\end{equation}

Let now $C>1$ be fixed. Applying results from \cite{FKKZ2010} to the
zero-potential case, we obtain that \eqref{exprLP} provides a linear
operator on the Banach space of ${\mathcal{B}}
(\Gamma_0)$-measurable functions
\begin{equation}  \label{norm}
\L _C:=\biggl\{ G:\Gamma_0\rightarrow{\mathbb{R}} \biggm| \|G\|_C:=
\int_{\Gamma_0} |G(\eta)| C^{|\eta|} d\lambda(\eta) <\infty\biggr\}
\end{equation}
with dense domain $\L_{2C}\subset\L _C$. If additionally,
\begin{equation}  \label{verysmallparam}
C\geq \frac{\sigma}{m}
\end{equation}
then $\bigl( \hL , \L_{2C}\bigr)$ is closable linear operator in
$\L_C$ and its closure $\bigl( \hL, D(\hL)\bigr)$ generates a
strongly continuous contraction semigroup ${\hat{T}} (t)$ on $\L_C$.

\section{Correlation functions evolution}

\subsection{Notion of correlation functions}
A measure $\rho $ on $\bigl( \Ga_0, \B(\Ga_0) \bigr)$ is
called locally finite iff $\rho (A)<\infty $ for all bounded sets
$A$ from $\B(\Ga _0)$, the set of such measures is denoted by
$\mathcal{M}_{\mathrm{lf}}(\Ga _0)$. One can define a transform
$K^{*}:\mathcal{M}_{\mathrm{fm}}^1(\Ga )\rightarrow
\mathcal{M}_{\mathrm{lf}}(\Ga _0),$ which is dual to the
$K$-transform, i.e., for every $\mu \in
\mathcal{M}_{\mathrm{fm}}^1(\Ga )$, $G\in \B_{\mathrm{bs}}(\Ga _0)$
we have
\[
\int_\Ga KG(\ga )\mu (d\ga )=\int_{\Ga _0}G(\eta )\,(K^{*}\mu
)(d\eta ).
\]
$\rho _\mu :=K^{*}\mu $ we call the correlation measure
corresponding to $\mu $.

As shown in \cite{KK2002} for $\mu \in
\mathcal{M}_{\mathrm{fm}}^1(\Ga )$ and any $G\in L^1(\Ga _0,\rho
_\mu )$ the series \eqref{KT3.15} is $\mu $-a.s. absolutely
convergent. Furthermore, $KG\in L^1(\Ga ,\mu )$ and
\begin{equation}
\int_{\Ga _0}G(\eta )\,\rho _\mu (d\eta )=\int_\Ga (KG)(\ga )\,\mu
(d\ga ). \label{Ktransform}
\end{equation}

Among the elements in the domain of the $K$-transform are also the
so-called coherent states $e_\lambda(f)$. By definition, for any
$\B(\X)$-measurable function $f$,
\[
e_\lambda (f,\eta ):=\prod_{x\in \eta }f(x) ,\ \eta \in \Gamma
_0\!\setminus\!\{\emptyset\},\quad  e_\lambda (f,\emptyset ):=1.
\]
Then, by \eqref{defLPm}, for $f\in L^1({{\mathbb{R}}^d},dx)$ we
obtain $e_\lambda(f)\in L^1(\Gamma_0,d\lambda)$ and
\begin{equation}  \label{LP-exp-mean}
\int_{\Gamma_0}
e_\lambda(f,\eta)d\lambda(\eta)=\exp\{\langle f\rangle\},
\end{equation}
here and below $\displaystyle\langle f\rangle=\int_{{\mathbb{R}}^d}
f(x)dx$.

Note that
\begin{equation}
\bigl( Ke_\lambda (f)\bigr) (\gamma) = \prod_{x\in \gamma
}\bigl(1+f(x)\bigr),\quad \mu
\mathrm{-a.a.}\,\gamma\in\Gamma,\label{kexp}
\end{equation}
for all $\B (\X)$-measurable functions $f$ such that
$e_\lambda(f)\in L^1(\Gamma_0,\rho_\mu)$, see,~e.g., \cite{KK2002}.

Let $\mu \in \mathcal{M}_{\mathrm{fm} }^1(\Ga )$. If for all for all
$\La \in \B_\La (\X)$ the projection
 $\mu_\La :=\mu
\circ p_\La ^{-1}$ is absolutely continuous with respect to (w.r.t.) $\pi^\La$ on
$\Ga_\La$ then   $\rho
_\mu :=K^{*}\mu $ is absolutely continuous w.r.t. $\la $ on $\Ga_0$. The corresponding Radon--Nikodym derivative
\[
k_{\mu}(\eta):=\frac{d\rho_{\mu}}{d\la}(\eta),\quad
\eta\in\Ga_{0}
\]
is called a correlation functional of a measure $\mu$.
The functions
\begin{equation}\label{cf}
k_{\mu}^{(n)}:(\X)^{n}\longrightarrow\R_{+},
\end{equation}
given by
\[ k_{\mu}^{(n)}(x_{1},\ldots,x_{n}):=\left\{\begin{array}{ll}
k_{\mu}(\{x_{1},\ldots,x_{n}\}), & \mbox{if $(x_{1},\ldots,x_{n})\in
\widetilde{(\X)^{n}}$}\\ 0, & \mbox{otherwise}
\end{array},
\right.\] are well known correlation functions of statistical
physics, see e.g \cite{Rue1970,Rue1969}.

Obviously, not any positive function on $\Ga_0$ is a correlation functional of a some
measure on $\Ga$. To describe sufficient condition on this we will do in
the following manner. Given $G_1$ and $G_2$ two $\mathcal{B}(\Gamma_0)$-measurable
functions, let us consider the $\star$-convolution between $G_1$ and
$G_2$,
\begin{equation}\label{rsdef}
(G_1\star G_2)(\eta ):=\sum_{\xi_1 \sqcup\xi_{2}\sqcup \xi_{3}=\eta
} G_1(\eta _1\cup\eta _2)G_2(\eta _2\cup\eta _3),
\end{equation}
where sign $\sqcup$ denotes disjoint union (parts may be empty), see
\cite{KK2002} for a details. It is straightforward to verify that
the space of all $\mathcal{B}(\Gamma_0)$-measurable functions
endowed with this product has the structure of a commutative algebra
with unit element $e_\lambda(0)$. Furthermore, for every $G_1,
G_2\in B_{bs}(\Gamma_0)$ we have $G_1\star G_2\in B_{bs}(\Gamma_0)$,
and
\begin{equation}
K\left( G_1\star G_2\right) =\left( KG_1\right) \cdot \left(
KG_2\right) \label{likefourier}
\end{equation}
cf.~\cite{KK2002}. Note that
\begin{equation}
e_\lambda(f)\star e_\lambda(g) = e_\lambda(f + g+ fg) \label{rscoh}
\end{equation}
for all $\mathcal{B}(\mathbb{R}^d)$-measurable functions $f$ and
$g$.

The following theorem shows when we can reconstruct a measure $\mu
\in \mathcal{M}_{\mathrm{fm} }^1(\Ga )$ by the system of symmetric
functions \eqref{cf}.
\begin{theorem}[\cite{KK2002}]\label{posdef} Let $k:\Ga_0\rightarrow
\R_+$ be measurable function such that $kd\la \in
\M_{\mathrm{lf}}(\Ga_0)$, $k(\emptyset)=1$, there exists
$C>0,\eps>0$ such that  $k(\eta)\leq
C^{|\eta|}\bigl(|\eta|!\bigr)^{1-\eps}$, $\eta\in\Ga_0$ and the
function $k$ is positive definite in the sense that for any $G\in
B_{\mathrm{bs}}(\Ga_0)$
\begin{equation}\label{ex:posdef}
\int_{\Ga_0}(G\star\bar{G})(\eta) k(\eta) d\la(\eta)\geq0.
\end{equation}
Then there exists a unique measure $\mu\in\M_{\mathrm{fm}}^1(\Ga)$
such that $k=k_\mu$.
\end{theorem}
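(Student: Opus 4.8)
\emph{Strategy.} I would transport the hypothesis through the $K$-transform into a moment problem on $\Ga$ and solve it by exhibiting $\mu$ as a projective limit of explicit finite-volume measures; the growth bound enters twice, once to make the finite-volume densities integrable and once (crucially) to turn positive-definiteness into genuine positivity.

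\emph{Reduction to a moment problem.} On the commutative algebra $\bigl(\Bbs(\Ga_0),\star\bigr)$ --- complex conjugation as involution, $e_\lambda(0)$ as unit --- put
\[
\Phi(G):=\int_{\Ga_0}G(\eta)\,k(\eta)\,d\la(\eta),
\]
which is well defined because $G$ has bounded support and $k\,d\la\in\M_{\mathrm{lf}}(\Ga_0)$. By \eqref{likefourier} the $K$-transform intertwines $\star$ with the pointwise product, so $\Psi:=\Phi\circ K^{-1}$ is a linear functional on the function algebra $K\Bbs(\Ga_0)\subset\F L^{0}(\Ga)$; since $K(G\star\bar G)=(KG)\overline{(KG)}$, hypothesis \eqref{ex:posdef} says exactly that $\Psi(F\bar F)\ge0$ for all $F\in K\Bbs(\Ga_0)$, and $k(\emptyset)=1$ forces $\Psi(\mathbf{1})=1$. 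The task is thus to show that such a normalized positive-definite functional is integration against a unique probability measure $\mu\in\M_{\mathrm{fm}}^1(\Ga)$; this is, after the translation, Lenard's point-process reconstruction theorem \cite{Len1975,Len1975a}, and once it holds $\int_\Ga KG\,d\mu=\Psi(KG)=\Phi(G)=\int_{\Ga_0}G\,k\,d\la$ for all $G\in\Bbs(\Ga_0)$, i.e. $K^*\mu=k\,d\la$ and $k_\mu=k$. I would prove the reconstruction by the following finite-volume route.

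\emph{Finite-volume construction.} For $\La\in\B_c(\X)$ the growth bound gives
\[
\int_{\Ga_\La}2^{|\eta|}k(\eta)\,d\la(\eta)\le\sum_{n\ge0}\frac{\bigl(2C\,\mathrm{m}(\La)\bigr)^{n}}{(n!)^{\eps}}<\infty ,
\]
where $\eps>0$ is essential, so the finite-volume inverse correlation transform
\[
p_\La(\eta):=\int_{\Ga_\La}(-1)^{|\xi|}\,k(\eta\cup\xi)\,d\la(\xi),\qquad\eta\in\Ga_\La ,
\]
lies in $L^1(\Ga_\La,d\la)$; set $\mu_\La:=p_\La\,d\la$. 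A short calculation with the Lebesgue--Poisson binomial identity yields $\mu_\La(\Ga_\La)=k(\emptyset)=1$ and $\int_{\Ga_\La}(KH)\,d\mu_\La=\int_{\Ga_\La}H\,k\,d\la$ for every $H\in\Bbs(\Ga_\La)$, so that $\mu_\La$ will be a probability measure with correlation function $k\!\upharpoonright_{\Ga_\La}$ as soon as $p_\La\ge0$. Choosing $H=G\star\bar G$ and using $K(G\star\bar G)=|KG|^2$ together with \eqref{ex:posdef} gives $\int_{\Ga_\La}|KG|^2\,d\mu_\La\ge0$ for all $G$; by \eqref{kexp} and \eqref{rscoh} the real algebra generated by $\{KG:G\in\Bbs(\Ga_\La)\}$ separates the points of $\Ga_\La$ and is closed under products, and a Stone--Weierstrass-type density argument --- in which the growth bound controls the mass of $\mu_\La$ on configurations with many points and so provides tightness --- upgrades positivity on squares to $p_\La\ge0$ $\la$-a.e. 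A direct computation with the inversion formulas shows $\mu_\La\circ p_{\La'}^{-1}=\mu_{\La'}$ for $\La'\subset\La$, so $\{\mu_\La\}_{\La\in\B_c(\X)}$ is consistent and $\mu:=\varprojlim_\La\mu_\La$ exists on $\Ga$ by the projective-limit theorem (as in the construction of $\pi$ in Section 2). Finally the growth bound makes every local moment $\int_\Ga|\ga_\La|^n\,\mu(d\ga)$ finite, so $\mu\in\M_{\mathrm{fm}}^1(\Ga)$, and the correlation identity passes to the limit, giving $k_\mu=k$.

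\emph{Uniqueness and main obstacle.} If $\mu'\in\M_{\mathrm{fm}}^1(\Ga)$ also has $k_{\mu'}=k$, its finite-dimensional marginals are prescribed by moment sequences built from $k$, and the bound $k(\eta)\le C^{|\eta|}(|\eta|!)^{1-\eps}$ satisfies Carleman's determinacy condition $\sum_n\bigl(\sup_{|\eta|=n}k(\eta)\bigr)^{-1/(2n)}=\infty$, so those marginals --- hence $\mu'$ --- are uniquely determined. The genuine difficulty is the positivity step: in this infinite-dimensional setting a positive-definite functional need not be a moment functional, so one must use precisely the quasi-analyticity furnished by the $(|\eta|!)^{1-\eps}$ growth --- the same Carleman input that yields determinacy --- to pass from $\int|KG|^2\,d\mu_\La\ge0$ to $p_\La\ge0$ and to control the approximating algebra at large particle number; everything else is bookkeeping with the $K$-transform.
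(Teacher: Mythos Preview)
The paper does not prove this theorem at all: it is quoted verbatim from \cite{KK2002} and used as a black box (see the attribution in the theorem label and the absence of any proof between the statement and the next subsection). So there is no ``paper's own proof'' to compare against; what you have written is, in effect, a sketch of the Kondratiev--Kuna argument itself.

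As a sketch, your outline is broadly faithful to how \cite{KK2002} proceeds: one builds the finite-volume densities $p_\La$ by the alternating-sum inversion, checks consistency, passes to the projective limit, and uses the growth bound both for integrability and for determinacy of the resulting moment problem. You also correctly single out the positivity of $p_\La$ as the real obstacle. However, your description of that step is the weak point. Saying that a ``Stone--Weierstrass-type density argument'' plus tightness upgrades $\int |KG|^2\,d\mu_\La\ge 0$ to $p_\La\ge 0$ glosses over exactly the place where the argument is delicate: the algebra $K\Bbs(\Ga_\La)$ consists of \emph{polynomials} in the occupation numbers, not bounded continuous functions, so Stone--Weierstrass does not apply directly, and positivity of a signed measure on squares of polynomials does not by itself force positivity of the measure. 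What actually does the work in \cite{KK2002,Len1975,Len1975a} is a quasi-analyticity argument: the growth bound $k(\eta)\le C^{|\eta|}(|\eta|!)^{1-\eps}$ makes the associated one-dimensional moment sequences satisfy Carleman's condition, and one then invokes the classical fact that a positive-definite moment functional whose moments grow at a Carleman rate is automatically a positive measure (this is the Hamburger/Nussbaum-type input). Your final paragraph alludes to this, but the body of the argument should route through it rather than through density of an algebra of bounded functions.
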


\subsection{Evolution of correlation functions}

 The space $(\L _C)^{\prime
}=\bigl(L^1(\Gamma_0, d\lambda_C)\bigr)^{\prime} =L^\infty(\Gamma_0,
d\lambda_C)$ is the
topologically dual space to the space $\L_C$. The space $L^\infty(\Gamma_0, d\lambda_C)$ is
isometrically isomorphic to the Banach space
\begin{equation*}
{\mathcal{K}}_{C}:=\left\{k:\Gamma_{0}\rightarrow{\mathbb{R}}\,\Bigm| k\cdot
C^{-|\cdot|}\in L^{\infty}(\Gamma_{0},\lambda)\right\}
\end{equation*}
with the norm $
\|k\|_{{\mathcal{K}}_C}:=\|C^{-|\cdot|}k(\cdot)\|_{L^{\infty}(\Gamma_{0},
\lambda)}, $ where the isomorphism is provided by the isometry $R_C$
\begin{equation}  \label{isometry}
(\L _C)^{\prime }\ni k \longmapsto R_Ck:=k\cdot C^{|\cdot|}\in
{\mathcal{K}} _C.
\end{equation}

In fact, we may say about a duality between Banach spaces $\L _C$
and ${\mathcal{K}}_C$, which is given by the following expression
\begin{equation}
\left\langle\!\left\langle G,\,k \right\rangle\!\right\rangle :=
\int_{\Gamma_{0}}G\cdot k\, d\lambda,\quad G\in\L _C, \ k\in {\mathcal{K}}_C
\label{duality}
\end{equation}
with
\begin{equation}
\left\vert \left\langle\!\left\langle G,k \right\rangle\!\right\rangle
\right\vert \leq \|G\|_C \cdot\|k\|_{{\mathcal{K}}_C}.  \label{funct_est}
\end{equation}
It is clear that for any $k\in {\mathcal{K}}_C$
\begin{equation}  \label{RB-norm}
|k(\eta)|\leq \|k\|_{{\mathcal{K}}_C} \, C^{|\eta|} \quad \text{for
} \lambda \text{-a.a. } \eta\in\Gamma_0.
\end{equation}

Let $\bigl( {\hat{L}} ^{\prime }, D({\hat{L}} ^{\prime })\bigr)$ be
an operator in $(\L_C)^{\prime }$ which is dual to the closed
operator $\bigl( {\hat{L}} , D({\hat{L}} )\bigr)$. We consider also
its image in ${\mathcal{K}}_C$ under isometry $R_C$, namely, let
${\hat{L}}^{*}=R_C{\hat{L}} ^{\prime }R_{C^{-1}}$ with a domain
$D({\hat{L}} ^{*})=R_C  D({\hat{L} } ^{\prime })$. Then, for any
$G\in\L _C$, $k\in D({\hat{L}}^\ast)$
\begin{align*}
\int_{\Gamma_0}G\cdot {\hat{L}}^\ast k
d\lambda=&\int_{\Gamma_0}G\cdot R_C{ \hat{L}} ^{\prime }R_{C^{-1}} k
d\lambda=\int_{\Gamma_0}G\cdot {\hat{L}}
^{\prime }R_{C^{-1}} k d\lambda_C \\
=& \int_{\Gamma_0}{\hat{L}} G\cdot R_{C^{-1}} k
d\lambda_C=\int_{\Gamma_0}{ \hat{L}} G\cdot k d\lambda,
\end{align*}
therefore, ${\hat{L}}^\ast$ is the dual operator to ${\hat{L}} $ w.r.t.
duality \eqref{duality}.

By, e.g., \cite{FKO2009}, we have the precise form of
${\hat{L}}^{*}$ on $D({\hat{L}}^\ast)$:
\begin{equation}
\bigl( \hL^* k\bigr) \left( \eta \right) =-m|\eta| k(\eta)+\sigma
\sum_{x\in \eta }k\left( \eta \setminus x\right) .\label{exprLPdual}
\end{equation}

In the same way one can consider the adjoint contraction semigroup
${\hat{T}} ^{\prime }(t)$ in $(\L _C)^{\prime }$ and its image
${\hat{T}} ^\ast(t)$ in ${\mathcal{K}}_C$. Now, we may apply general
results about adjoint semigroups (see, e.g., \cite{EN2000}) onto the
contraction semigroup ${\hat{T}}^\ast(t)$. The last semigroup will
be weak*-continuous, moreover, weak*-differentiable at $0$ and
${\hat{L}}^\ast$ will be weak*-generator of ${\hat{T}}^\ast(t)$.
Here and below we mean ``weak*-properties'' w.r.t. duality
\eqref{duality}. Let $\mathring{\K}_C=\bigl\{ k\in{\mathcal{K}}_C
\bigm| \exists \lim_{t\downarrow0}\bigl\| {\hat{T}}^\ast(t)k -
k\bigr\|_{{\mathcal{K}}_C} =0\bigr\}$. Then $\mathring{\K}_C$ is
closed, weak*-dense, ${\hat{T}}^\ast(t)$-invariant linear subspace
of ${\mathcal{K}}_C$. Moreover, $\mathring{\K}_C=\overline{D(
{\hat{L}}^\ast)}$ (the closure is in the norm of ${\mathcal{K}}_C$).
Let ${ \hat{T}}^\odot(t)$ denote the restriction of
${\hat{T}}^\ast(t)$ onto Banach space ${\mathring{\K}}_C$. Then
${\hat{T}} ^\odot(t)$ is a contraction $C_0$-semigroup on
${\mathring{\K}}_C$ and its generator ${\hat{L}} ^\odot$ will be
part of $ {\hat{L}}^\ast$, namely, $D({\hat{L}} ^\odot)=\bigl\{k\in
D({\hat{L}}^\ast) \bigm| {\hat{L}}^\ast k\in
\overline{D({\hat{L}}^\ast)}\bigr\}$ and ${\hat{L} }^\ast k
={\hat{L}}^\odot k$ for any $k\in D({\hat{L}}^\odot)$.

Using simple reccurent structure of the operator \eqref{exprLPdual} we may
find explicit expression for the action of the contraction semigroup ${\hat{T}}^\ast(t)$
from the solution of the Cauchy problem
\begin{equation}
\frac{\partial }{\partial t}k_{t}=\hL^* k_{t} ,\qquad  k_t
\bigm|_{t=0}=k_0.  \label{KmE}
\end{equation}

To do this let us define the following associative and commutative convolution
on measurable functions on $\Ga_0$
\begin{equation}
(G_1 * G_2) (\eta) = \sum_{\xi\subset\eta} G_1(\xi) G_2(\eta
\setminus \xi), \quad \eta\in\Ga_0.\label{defast}
\end{equation}One can consider an algebra of measurable functions on $\Ga_0$
with such a product and the unit element $1^\ast (\eta):=0^{|\eta|}$. Note that,
\begin{align}
e_\la(f) * e_\la(g) =& e_\la(f+g) \label{addexp}\\
e_\la(f)\bigl( G_1 \ast G_2\bigr) =&\bigl(e_\la(f) G_1\bigr)\ast
\bigl(e_\la(f) G_2\bigr).
\end{align}

\begin{theorem}\label{solutionofCP}
The function
\begin{align}
k_{t}\left( \eta \right) = & \, e^{-tm|\eta|} \biggl( e_{\lambda
}\left(
 \frac{\sigma}{m}(e^{tm}-1)\right) \ast k_{0}\biggr) \left( \eta
\right) \label{freedev-m-dyn}\\ =& \biggl( e_{\lambda }\left(
\frac{\sigma}{m}(1-e^{-tm})\right) \ast \bigl( e_{\lambda }(
e^{-tm})  k_{0}\bigr)\biggr) \left( \eta \right).
\label{freedev-m-dyn-alt}
\end{align}
is a well-defined point-wise differentiable function which satisfied \eqref{KmE}.
\end{theorem}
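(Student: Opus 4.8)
The plan is to prove all three claims --- well-definedness, pointwise differentiability, and that \eqref{freedev-m-dyn} solves \eqref{KmE} --- by a direct computation with $\eta\in\Gamma_0$ kept fixed. Set $a(t):=\frac{\sigma}{m}(e^{tm}-1)$, so that $a(0)=0$ and $a'(t)=\sigma e^{tm}$, and note $e^{-tm|\eta|}=e_\lambda(e^{-tm},\eta)$, where $e^{-tm}$ denotes the constant function. First I would check that \eqref{freedev-m-dyn} and \eqref{freedev-m-dyn-alt} coincide: by the pointwise identity $e_\lambda(f)\,e_\lambda(g)=e_\lambda(fg)$ together with the identity for $e_\lambda(f)\bigl(G_1\ast G_2\bigr)$ displayed just after \eqref{addexp},
\begin{equation*}
e_\lambda(e^{-tm})\bigl(e_\lambda(a(t))\ast k_0\bigr)=\bigl(e_\lambda(e^{-tm})\,e_\lambda(a(t))\bigr)\ast\bigl(e_\lambda(e^{-tm})k_0\bigr)=e_\lambda\bigl(e^{-tm}a(t)\bigr)\ast\bigl(e_\lambda(e^{-tm})k_0\bigr),
\end{equation*}
and $e^{-tm}a(t)=\frac{\sigma}{m}(1-e^{-tm})$, which is \eqref{freedev-m-dyn-alt}. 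Expanding the convolution \eqref{defast} gives, for fixed $\eta$,
\begin{equation*}
k_t(\eta)=e^{-tm|\eta|}\sum_{\xi\subset\eta}a(t)^{|\xi|}\,k_0(\eta\setminus\xi),
\end{equation*}
a finite sum (since $|\eta|<\infty$) of smooth functions of $t$; hence $k_t(\eta)$ is well defined for any $k_0$ and $t\mapsto k_t(\eta)$ is pointwise differentiable, and at $t=0$ only the summand $\xi=\emptyset$ remains, so $k_0$ is recovered and the initial condition in \eqref{KmE} holds. (If one also wants $k_t\in\K_C$ for every $t$, this follows from \eqref{RB-norm} and the finiteness of the convolution, but it is not part of the statement.)

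Next I would differentiate termwise. The product rule yields
\begin{equation*}
\frac{\partial}{\partial t}k_t(\eta)=-m|\eta|\,k_t(\eta)+e^{-tm|\eta|}\,\sigma e^{tm}\sum_{\xi\subset\eta}|\xi|\,a(t)^{|\xi|-1}k_0(\eta\setminus\xi),
\end{equation*}
and the first summand is precisely the $-m|\eta|k_t(\eta)$ term of $\hL^{*}k_t$ in \eqref{exprLPdual}. The crux --- the one step that is more than bookkeeping --- is to identify the second summand. Writing $|\xi|=\sum_{x\in\xi}1$, interchanging the two summations, and, for each fixed $x\in\eta$, replacing $\xi$ by $\xi\setminus x$ (which then runs over all subsets of $\eta\setminus x$), one obtains
\begin{equation*}
\sum_{\xi\subset\eta}|\xi|\,a(t)^{|\xi|-1}k_0(\eta\setminus\xi)=\sum_{x\in\eta}\sum_{\xi'\subset\eta\setminus x}a(t)^{|\xi'|}k_0\bigl((\eta\setminus x)\setminus\xi'\bigr)=\sum_{x\in\eta}\bigl(e_\lambda(a(t))\ast k_0\bigr)(\eta\setminus x).
\end{equation*}

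Finally I would collect the exponential factors. Since $|\eta\setminus x|=|\eta|-1$, the displayed formula for $k_t$ gives $\bigl(e_\lambda(a(t))\ast k_0\bigr)(\eta\setminus x)=e^{tm(|\eta|-1)}k_t(\eta\setminus x)$, whence the second summand above equals $e^{-tm|\eta|}\,\sigma e^{tm}\sum_{x\in\eta}e^{tm(|\eta|-1)}k_t(\eta\setminus x)=\sigma\sum_{x\in\eta}k_t(\eta\setminus x)$, the exponents cancelling. Adding the two summands gives exactly $\bigl(\hL^{*}k_t\bigr)(\eta)$ as in \eqref{exprLPdual}, so $k_t$ satisfies \eqref{KmE} pointwise. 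Apart from the reindexing above the computation is mechanical; alternatively one could derive the same closed form by exponentiating $\hL^{*}=-mN+\sigma D$, where the operators $Nk(\eta)=|\eta|\,k(\eta)$ and $Dk(\eta)=\sum_{x\in\eta}k(\eta\setminus x)$ obey $ND-DN=D$, but the direct verification is the shortest route.
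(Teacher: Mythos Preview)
Your proof is correct and takes a genuinely different route from the paper. The paper proceeds by induction on $n=|\eta|$: it solves the scalar ODE for $k_t^{(1)}$, assumes the closed form at level $n-1$, inserts this into the first-order inhomogeneous ODE for $k_t^{(n)}$ coming from \eqref{exprLPdual}, and integrates explicitly to recover the formula at level $n$. You instead fix $\eta$, expand \eqref{freedev-m-dyn} as the finite sum $e^{-tm|\eta|}\sum_{\xi\subset\eta}a(t)^{|\xi|}k_0(\eta\setminus\xi)$, differentiate it termwise, and use the reindexing $\sum_{\xi\subset\eta}|\xi|\,(\cdots)=\sum_{x\in\eta}\sum_{\xi'\subset\eta\setminus x}(\cdots)$ to match the right-hand side of \eqref{exprLPdual}. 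Your argument is shorter and more structural (it leans on the algebraic identities for $e_\lambda$ and $\ast$ already recorded around \eqref{addexp}), and it also makes the equivalence of \eqref{freedev-m-dyn} and \eqref{freedev-m-dyn-alt} explicit, which the paper leaves to the reader. The inductive approach, on the other hand, has the virtue of \emph{deriving} the formula rather than verifying a guess, so a reader who did not already have \eqref{freedev-m-dyn} in hand sees where it comes from.
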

\begin{proof}
By \eqref{exprLPdual}, \eqref{KmE} implies
\[
\frac{\partial }{\partial t}k_{t}^{(1)}(x_1)=-m k_{t}^{(1)}(x_1)
+\sigma,
\]
that yields
\[
k_{t}^{(1)}(x_{1})=e^{-mt}k_{0}^{(1)}(x_{1})+\sigma
\int_{0}^{t}e^{-m(t-s)}ds=e^{-mt}\left( k_{0}^{(1)}(x_{1})+\frac{
\sigma }{m}(e^{mt}-1)\right).
\]
Suppose that \eqref{freedev-m-dyn} holds for $|\eta|=n-1$, namely,
\[
k_{t}^{(n-1)}\left( x_{1},\ldots ,x_{n-1}\right)
=e^{-m(n-1)t}\sum_{\xi \subset \left\{ x_{1},\ldots ,x_{n-1}\right\}
}k_{0}^{(\left\vert \xi \right\vert )}\left( \xi \right) \left(
\frac{
\sigma }{m}(e^{mt}-1)\right) ^{n-1-\left\vert \xi \right\vert }.
\]
Then, by \eqref{exprLPdual} and \eqref{KmE} we obtain
\begin{align*}
&k_{t}^{(n)}\left( x_{1},\ldots ,x_{n}\right)  \\
=&e^{-mnt}k_{0}^{(n)}\left( x_{1},\ldots ,x_{n}\right)  +\sigma
\int_{0}^{t}e^{-mn(t-s)}\sum_{i=1}^{n}k_{s}^{(n-1)}\left(
x_{1},\ldots ,\check{x}_{i},\ldots ,x_{n}\right) ds \\
=&\, e^{-mnt}k_{0}^{(n)}\left( x_{1},\ldots ,x_{n}\right)  \\
&+\sigma
e^{-mnt}\int_{0}^{t}e^{mns}\sum_{i=1}^{n}e^{-m(n-1)s}\sum_{\xi
\subset \left\{ x_{1},\ldots ,\check{x}_{i},\ldots ,x_{n}\right\}
}k_{0}^{(\left\vert \xi \right\vert )}\left( \xi \right)
\left(\frac{
\sigma }{m}(e^{mt}-1)\right) ^{n-1-\left\vert \xi \right\vert }ds \\
=&\, e^{-mnt}k_{0}^{(n)}\left( x_{1},\ldots ,x_{n}\right)  \\
& +e^{-mnt}\sum_{\xi \subsetneq \left\{ x_{1},\ldots ,x_{n}\right\}
}\left( n-\left\vert \xi \right\vert \right) k_{0}^{(\left\vert \xi
\right\vert )}\left( \xi \right) \left( \frac{\sigma }{m}\right)
^{n-\left\vert \xi \right\vert }m\int_{0}^{t}\left( e^{ms}-1\right)
^{n-1-\left\vert \xi
\right\vert }e^{ms}ds \\
=&\, e^{-mnt}k_{0}^{(n)}\left( x_{1},\ldots ,x_{n}\right)
+e^{-mnt}\sum_{\xi \subsetneq \left\{ x_{1},\ldots ,x_{n}\right\}
}k_{0}^{(\left\vert \xi \right\vert )}\left( \xi \right) \left(
\frac{
\sigma }{m}(e^{mt}-1)\right) ^{n-\left\vert \xi \right\vert } \\
=&\, e^{-mnt}\sum_{\xi \subset \left\{ x_{1},\ldots ,x_{n}\right\}
}k_{0}^{(\left\vert \xi \right\vert )}\left( \xi \right) \left(
\frac{ \sigma }{m}(e^{mt}-1)\right) ^{n-\left\vert \xi \right\vert
}\\=&\sum_{\xi \subset \left\{ x_{1},\ldots ,x_{n}\right\}
}e^{-m\left\vert \xi \right\vert t}k_{0}^{(\left\vert \xi
\right\vert )}\left( \xi \right) \left(
\frac{\sigma}{m}(1-e^{-mt})\right) ^{n-\left\vert \xi \right\vert }.
\end{align*}
By a mathematical induction principle, the statement is proved.
\end{proof}

\begin{remark}
Note that, by \eqref{freedev-m-dyn}, $k_0(\emptyset)=1$ implies
$k_t(\emptyset)=1$ as well as $k_0>0$ implies
$k_t>0$.
\end{remark}

\begin{proposition}\label{subPsm}
Let $k_0\in\K_C$ and $k_t$ is the solution of \eqref{KmE}. Then $k_t\in\K_{C'}$,
where $C'=\max\{C;\frac{\sigma}{m}\}$. More precisely,
\[
\bigl| k_t(\eta) \bigr| \leq \|k_0\|_{\K_C} \biggl(
\max\Bigl\{C;\frac{\sigma}{m}\Bigr\}\biggr)^{|\eta|}, \quad
\eta\in\Ga_0.
\]
\end{proposition}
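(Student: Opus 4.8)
The plan is to exploit the explicit formula \eqref{freedev-m-dyn-alt} for $k_t$, which reduces the whole statement to an elementary application of the binomial theorem. First I would expand the $\ast$-convolution in \eqref{freedev-m-dyn-alt} by its definition \eqref{defast}; since $e_\la(c,\xi)=c^{|\xi|}$ for a constant $c$, this gives, for every $\eta\in\Ga_0$,
\begin{equation*}
k_t(\eta)=\sum_{\xi\subset\eta}\Bigl(\tfrac{\sigma}{m}(1-e^{-tm})\Bigr)^{|\eta\setminus\xi|}\bigl(e^{-tm}\bigr)^{|\xi|}k_0(\xi),
\end{equation*}
which is a finite sum, so in particular $k_t(\eta)$ is well defined for each $\eta$.

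Next I would estimate each summand via the pointwise bound \eqref{RB-norm}, namely $|k_0(\xi)|\le\|k_0\|_{\K_C}\,C^{|\xi|}$ for $\la$-a.a.\ $\xi$. Grouping the subconfigurations $\xi\subset\eta$ by their cardinality $k\in\{0,\dots,|\eta|\}$ — of which there are $\binom{|\eta|}{k}$ — and summing, the binomial theorem yields
\begin{equation*}
|k_t(\eta)|\le\|k_0\|_{\K_C}\sum_{k=0}^{|\eta|}\binom{|\eta|}{k}\Bigl(\tfrac{\sigma}{m}(1-e^{-tm})\Bigr)^{|\eta|-k}\bigl(Ce^{-tm}\bigr)^{k}=\|k_0\|_{\K_C}\Bigl(\tfrac{\sigma}{m}(1-e^{-tm})+Ce^{-tm}\Bigr)^{|\eta|}.
\end{equation*}

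It remains to observe the one mildly nontrivial point: setting $s:=e^{-tm}\in(0,1]$, the base $\frac{\sigma}{m}(1-s)+Cs$ is a convex combination of $\frac{\sigma}{m}$ and $C$, hence does not exceed $\max\{C,\frac{\sigma}{m}\}$ for every $t\ge0$. This gives the claimed inequality and, in particular, $k_t\in\K_{C'}$ with $C'=\max\{C,\frac{\sigma}{m}\}$. There is essentially no serious obstacle; the only things to be careful about are that \eqref{RB-norm} holds only $\la$-almost everywhere (so the resulting bound on $|k_t(\eta)|$ is to be read in the same $\la$-a.e.\ sense, consistent with how elements of $\K_C$ are specified), and recognizing the convex-combination structure, which is exactly what makes the estimate uniform in $t$ and singles out $\max\{C,\sigma/m\}$ as the right constant.
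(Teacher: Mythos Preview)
Your proof is correct and follows essentially the same approach as the paper: both use the explicit formula from Theorem~\ref{solutionofCP}, bound $k_0$ via \eqref{RB-norm}, collapse the resulting sum to $\|k_0\|_{\K_C}\bigl(Ce^{-tm}+\frac{\sigma}{m}(1-e^{-tm})\bigr)^{|\eta|}$, and then observe that the base is a convex combination of $C$ and $\frac{\sigma}{m}$. The only cosmetic difference is that the paper starts from \eqref{freedev-m-dyn} and uses the identity \eqref{addexp} to collapse the convolution, whereas you start from \eqref{freedev-m-dyn-alt} and write out the binomial sum explicitly.
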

\begin{proof}
By \eqref{freedev-m-dyn}, \eqref{RB-norm}, and \eqref{addexp}, one get
\begin{align*}
|k_{t}\left( \eta \right) |\leq &\, e^{-tm|\eta|} \biggl( e_{\lambda
}\left( \frac{ \sigma }{m}(e^{mt}-1)\right) \ast
\bigl(\|k_0\|_{\K_C}e_{\lambda }(C)\bigr)\biggr) \left( \eta \right)
\\=&\, \|k_0\|_{\K_C} e^{-tm|\eta|} e_{\lambda }\left(C+ \sigma
\frac{e^{tm}-1}{m}, \eta\right) \\
 = &\, \|k_0\|_{\K_C} e_{\lambda }\left(Ce^{-tm}+ \sigma \frac{1-e^{-tm}}{m},
\eta\right) \leq\|k_0\|_{\K_C}  e_\la \biggl(
\max\Bigl\{C;\frac{\sigma}{m}\bigr\}, \eta\biggr),
\end{align*}
since
\begin{equation}
Ce^{-tm}+ \sigma \frac{1-e^{-tm}}{m} =
\left(C-\frac{\sigma}{m}\right)e^{-tm} + \frac{\sigma}{m} \leq
\max\left\{C;\frac{\sigma}{m}\right\}. \label{niceest}
\end{equation}
Hence, this dynamics stays so-called \textit{sub-Poissonian} (cf. Remark~\ref{rem:P}
below).
\end{proof}

\begin{remark}\label{rem:P}
Let us stress that if we start in \eqref{KmE} from the Poisson distribution
$\mu_0=\pi_A$ with
$k_{\mu_0}(\eta)=k_0(\eta)=A^{|\eta|}$, $A>0$ then the distribution stays Poissonian during
dynamics:
\begin{equation}\label{Poissondynamics}
k_t(\eta)=\biggl(\Bigl(A-\frac{\sigma}{m}\Bigr)e^{-tm} +
\frac{\sigma}{m} \biggr)^{|\eta|}.
\end{equation}
\end{remark}

\begin{corollary}\label{semigroupexpr}
Let $C\geq\dfrac{\sigma}{m}$. Then for any $k\in\K_C$
\begin{equation}\label{sgcorsm}
\bigl({\hat{T}}^\ast(t)k\bigr)(\eta):=e^{-tm|\eta|} \left( e_{\lambda }\left(
 \frac{\sigma}{m}(e^{tm}-1)\right) \ast k\right) \left( \eta
\right) , \quad \eta\in\Ga_0, \  t>0.
\end{equation}
\end{corollary}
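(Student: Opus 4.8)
The plan is to read \eqref{sgcorsm} as the definition of a family of operators $\bigl(U(t)\bigr)_{t\ge0}$ on $\K_C$ and to prove that this family coincides with the adjoint semigroup $\hat T^\ast(t)$ constructed above. First I would observe that, by Proposition~\ref{subPsm} together with the hypothesis $C\ge\sigma/m$ (so that $\max\{C,\sigma/m\}=C$), each $U(t)$ maps $\K_C$ into itself with $\|U(t)\|_{\K_C\to\K_C}\le 1$, and that $U(0)=\mathrm{id}$ because $e_\lambda(0)=1^\ast$ is the unit of the $\ast$-convolution. The semigroup identity $U(t+s)=U(t)U(s)$ I would check by a direct computation: writing out the $\ast$-convolution \eqref{defast} twice, exchanging the order of summation and using \eqref{addexp}, the claim reduces to the elementary identity
\[
\Bigl(e^{-tm}\tfrac{\sigma}{m}(1-e^{-sm})+\tfrac{\sigma}{m}(1-e^{-tm})\Bigr)^{n}
=\Bigl(\tfrac{\sigma}{m}(1-e^{-(t+s)m})\Bigr)^{n},\qquad n\in\N_0.
\]

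To identify $U(t)$ with $\hat T^\ast(t)$ I would exhibit its pre-adjoint on $\L_C$. For $G\in\L_C$ put
\[
\bigl(\tilde U(t)G\bigr)(\eta):=e^{-tm|\eta|}\int_{\Gamma_0}e_\lambda\!\Bigl(\tfrac{\sigma}{m}(1-e^{-tm}),\xi\Bigr)\,G(\eta\cup\xi)\,d\lambda(\xi),\qquad\eta\in\Gamma_0.
\]
Using the Lebesgue--Poisson ``splitting'' identity $\int_{\Gamma_0}\sum_{\xi\subset\eta}H(\xi,\eta\setminus\xi)\,d\lambda(\eta)=\int_{\Gamma_0}\!\int_{\Gamma_0}H(\xi,\zeta)\,d\lambda(\xi)\,d\lambda(\zeta)$ together with the estimate \eqref{niceest}, one gets $\|\tilde U(t)G\|_C\le\|G\|_C$ (in particular the integral is well defined for $\lambda$-a.a.\ $\eta$), and, by Fubini,
\[
\langle\!\langle\tilde U(t)G,\,k\rangle\!\rangle=\langle\!\langle G,\,U(t)k\rangle\!\rangle,\qquad G\in\L_C,\ k\in\K_C,
\]
so that $U(t)=\bigl(\tilde U(t)\bigr)'$ after transport by $R_C$. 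The same splitting identity (now applied with $2C$, for which \eqref{niceest} holds verbatim) shows that $\tilde U(t)$ leaves the core $\L_{2C}$ invariant, and a differentiation under the integral sign, using the Mecke identity for $\lambda$ and the explicit form \eqref{exprLP}, yields $\frac{d}{dt}\tilde U(t)G=\hat L\,\tilde U(t)G$ in $\L_C$ for every $G\in\L_{2C}$; together with $\tilde U(0)=\mathrm{id}$ and the bound $\|\hat L\tilde U(t)G\|_C\le\|\hat L\|_{\L_{2C}\to\L_C}\|G\|_{2C}$ this gives strong continuity at $0$ first on $\L_{2C}$ and then, by density and uniform boundedness, on all of $\L_C$.

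Consequently $\bigl(\tilde U(t)\bigr)_{t\ge0}$ is a $C_0$-contraction semigroup on $\L_C$ whose generator extends $(\hat L,\L_{2C})$; since $(\hat L,D(\hat L))$ generates a contraction semigroup and $\L_{2C}$ is a core for it, the generator of $\tilde U(t)$ is exactly $(\hat L,D(\hat L))$, hence $\tilde U(t)=\hat T(t)$. Passing to adjoints and transporting by $R_C$ gives $U(t)=\hat T^\ast(t)$, which is \eqref{sgcorsm}. The main obstacle is the construction and control of the pre-adjoint $\tilde U(t)$: one must verify that it is genuinely a contraction on $\L_C$ (this is exactly where \eqref{niceest} is used), that it preserves $\L_{2C}$, and that it is strongly continuous with generator $\hat L$; once this is settled everything else is bookkeeping. (Alternatively one can argue directly, without $\tilde U$: for $k\in D(\hat L^\ast)$ both $\hat T^\ast(t)k$ and the explicit function of Theorem~\ref{solutionofCP} solve \eqref{KmE}, and uniqueness together with weak$^\ast$-density of $D(\hat L^\ast)$ in $\K_C$ forces them to agree — but making this uniqueness rigorous at the level of the weak$^\ast$-continuous semigroup $\hat T^\ast(t)$ requires essentially the same pairing estimates against test functions $G\in\Bbs(\Gamma_0)$.)
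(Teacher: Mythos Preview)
Your argument is correct and considerably more careful than what the paper does. In the paper this corollary is stated without proof: it is presented as the immediate consequence of Theorem~\ref{solutionofCP} (explicit pointwise solution of \eqref{KmE}) together with Proposition~\ref{subPsm} (which, under $C\ge\sigma/m$, keeps the solution in $\K_C$). The implicit logic is precisely the alternative you sketch in your last parenthesis: both $\hat T^\ast(t)k$ and the explicit formula solve the same Cauchy problem, so they agree on $D(\hat L^\ast)$ and hence, by weak$^\ast$-density, everywhere.

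Your main route, however, is genuinely different and in a sense runs the paper backwards. You build the pre-adjoint $\tilde U(t)$ on $\L_C$ first, show directly that it is a contraction $C_0$-semigroup with generator $\hat L$ on the core $\L_{2C}$, and then dualize. The formula you write for $\tilde U(t)$ is exactly \eqref{Lsm-sg}, which the paper only derives later, in Section~5, \emph{from} Corollary~\ref{semigroupexpr} via duality and Lemma~\ref{Minlos}. So you obtain \eqref{Lsm-sg} as a by-product rather than as a consequence, and your identification $\tilde U(t)=\hat T(t)$ rests on the clean semigroup-theoretic fact that two contraction $C_0$-semigroups agreeing on a core coincide. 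This buys you a self-contained argument that avoids having to justify uniqueness for \eqref{KmE} at the weak$^\ast$ level; the price is the extra bookkeeping (invariance of $\L_{2C}$, strong continuity, differentiation under the integral) that you correctly flag as the main obstacle.
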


As was noted in \cite{FKK2010}, $\K_\aC\subset D(\hL^*)$ for any
$\a\in(0;1)$. Moreover, by Proposition~\ref{subPsm}, if $k\in\K_\aC$
then $k_t={\hat{T}}^\ast(t)k={\hat{T}} ^\odot(t)k\in\K_{C'}$, where
$C'=\max\{\aC;\frac{\sigma}{m}\}$. Therefore, the following
improvement of the result from \cite{FKK2010} holds.
\begin{proposition}
Let $C>\dfrac{\sigma}{m}$. Then for any
$\a\in\Bigl(\dfrac{\sigma}{mC};1\Bigl)$ the Banach subspace
$\overline{\K_\aC}$ of the Banach space $\K_C$ is ${\hat{T}}
^\odot(t)$-invariant. Here closure is taken in the norm of $\K_C$.
The restriction $\hat{T}^{\odot\a}(t)$ of $\hat{T}^{\odot}(t)$ onto
$\overline{\K_\aC}$ is a contraction $C_0$-semigroup.
\end{proposition}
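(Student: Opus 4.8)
The plan is to obtain the statement as a self-improvement of Proposition~\ref{subPsm}, applied with the exponential weight $\aC$ in place of $C$, combined with the elementary fact that a contraction $C_0$-semigroup restricts to a contraction $C_0$-semigroup on any closed invariant subspace.

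First I would note that the hypothesis $\a\in\bigl(\frac{\sigma}{mC};1\bigr)$ is precisely what makes $\aC$ both an admissible and a self-reproducing weight: it gives $\frac{\sigma}{m}<\aC<C$. Thus $\K_\aC$ is one of the Banach spaces of type \eqref{norm} for which Corollary~\ref{semigroupexpr} applies, and, crucially, $\max\{\aC;\frac{\sigma}{m}\}=\aC$. Reading Proposition~\ref{subPsm} with $C$ replaced by $\aC$ then yields: for every $k\in\K_\aC$ and $t>0$ one has $\hat{T}^{\ast}(t)k\in\K_\aC$ with $\|\hat{T}^{\ast}(t)k\|_{\K_\aC}\le\|k\|_{\K_\aC}$; in particular $\hat{T}^{\ast}(t)$ maps $\K_\aC$ into itself. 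I would also record the continuous embedding $\K_\aC\hookrightarrow\K_C$ with $\|\cdot\|_{\K_C}\le\|\cdot\|_{\K_\aC}$, which follows from $(\aC)^{|\eta|}\le C^{|\eta|}$ together with \eqref{RB-norm}, so that $\overline{\K_\aC}$ indeed makes sense as a Banach subspace of $\K_C$.

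Next I would verify that $\hat{T}^{\odot}(t)$ is well defined on $\overline{\K_\aC}$ and agrees there with $\hat{T}^{\ast}(t)$. Since $\K_\aC\subset D(\hL^{\ast})$ (noted above, from \cite{FKK2010}) and $\mathring{\K}_C=\overline{D(\hL^{\ast})}$ is norm-closed in $\K_C$, the closure $\overline{\K_\aC}$ taken in $\K_C$ lies inside $\mathring{\K}_C$, on which $\hat{T}^{\odot}(t)$ is by construction the restriction of $\hat{T}^{\ast}(t)$. Invariance is then a density argument: $\hat{T}^{\odot}(t)$ sends the dense subset $\K_\aC$ of $\overline{\K_\aC}$ into $\K_\aC\subset\overline{\K_\aC}$, and $\hat{T}^{\odot}(t)$ is a bounded (contraction) operator on $\K_C$, so by continuity $\hat{T}^{\odot}(t)\,\overline{\K_\aC}\subset\overline{\K_\aC}$.

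Finally, for the semigroup properties I would invoke the standard fact that the restriction $\hat{T}^{\odot\a}(t)$ of a contraction $C_0$-semigroup to a closed invariant subspace is again a contraction $C_0$-semigroup: the semigroup law and the contraction bound in the $\K_C$-norm are inherited verbatim, and for $k\in\overline{\K_\aC}$ strong continuity at $t=0$ is simply $\|\hat{T}^{\odot}(t)k-k\|_{\K_C}\to0$, which holds because $\overline{\K_\aC}\subset\mathring{\K}_C$ and $\hat{T}^{\odot}(t)$ is a $C_0$-semigroup on $\mathring{\K}_C$. I do not anticipate a serious obstacle; the only point requiring attention is that the self-improvement ``$C'=\aC$ rather than $\max\{\aC;\frac{\sigma}{m}\}$'' genuinely uses the hypothesis $\a>\frac{\sigma}{mC}$ to keep the output weight equal to the input weight $\aC$ — and this, via the contraction estimate of Proposition~\ref{subPsm}, is exactly what upgrades the earlier invariance remark of \cite{FKK2010} to an honest $C_0$-semigroup on the closure $\overline{\K_\aC}$.
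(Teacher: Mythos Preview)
Your argument is correct and follows exactly the route the paper indicates in the paragraph preceding the proposition: apply Proposition~\ref{subPsm} with weight $\aC$ so that $\max\{\aC;\frac{\sigma}{m}\}=\aC$, use $\K_\aC\subset D(\hL^{\ast})$ to land inside $\mathring{\K}_C$, and then extend to the $\K_C$-closure by continuity. The paper states the proposition without a separate proof, treating that paragraph as the justification; you have simply written out the standard density and restriction details that the paper leaves implicit.
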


As a result, we have that for any $C\geq\dfrac{\sigma}{m}$ the
Cauchy problem \eqref{KmE} is solvable on $\K_C$. Moreover, for
$C>\dfrac{\sigma}{m}$ and $\a\in\Bigl(\dfrac{\sigma}{mC};1\Bigl)$
this problem is solvable on $\overline{\K_\aC}$.

At the end let us us find an expression
for the resolvent $R_{z}^{\odot }$
of the generator $\hat{L}^{\odot}$ of the semigroup $\hat{T}^{\odot}(t)$.
\begin{proposition}
For any $z$ with $\mathrm{Re}\,z>0$ there
exists a bounded  operator $R_{z}^{\odot }=(z-\hat{L}^{\odot})^{-1}$ on the space
$\mathring{\K}_C$ such that
for any $k\in\mathring{\K}_C$
\[
\left( R_{z}^{\odot }k\right) \left( \eta \right)  =\frac{1}{m}\sum_{\xi
\subset \eta }\left( \frac{\sigma }{m}\right) ^{\left\vert \xi \right\vert
}B\left( \frac{z}{m}+\left\vert \eta \right\vert -\left\vert \xi \right\vert
,\left\vert \xi \right\vert +1\right) k\left( \eta \setminus \xi \right) ,
\]
where $B(x,y)=\int_{0}^{1}s^{x-1}\left( 1-s\right) ^{y-1 }ds$ is the Euler
beta function.
\end{proposition}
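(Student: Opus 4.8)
The plan is to obtain $R_z^\odot$ as the Laplace transform of the semigroup $\hat{T}^\odot(t)$. Since $\bigl(\hat{T}^\odot(t)\bigr)_{t\ge 0}$ is a contraction $C_0$-semigroup on $\mathring{\K}_C$ with generator $\hat{L}^\odot$, the standard theory of $C_0$-semigroups (see, e.g., \cite{EN2000}) guarantees that for every $z$ with $\mathrm{Re}\,z>0$ the operator $z-\hat{L}^\odot$ is boundedly invertible on $\mathring{\K}_C$, with
\[
R_z^\odot k=\int_0^\infty e^{-zt}\,\hat{T}^\odot(t)k\,dt,\qquad k\in\mathring{\K}_C,
\]
the integral converging as a Bochner integral in $\mathring{\K}_C$ and satisfying $\|R_z^\odot k\|_{\K_C}\le(\mathrm{Re}\,z)^{-1}\|k\|_{\K_C}$ because $\|\hat{T}^\odot(t)\|\le 1$. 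This already yields the qualitative part of the statement, so it remains to evaluate the integral explicitly.

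Since $\hat{T}^\odot(t)$ is the restriction of $\hat{T}^\ast(t)$, I would substitute the pointwise formula \eqref{sgcorsm} and expand the $\ast$-convolution according to \eqref{defast}, using that $e_\lambda\bigl(\tfrac{\sigma}{m}(e^{tm}-1),\xi\bigr)=\bigl(\tfrac{\sigma}{m}\bigr)^{|\xi|}(e^{tm}-1)^{|\xi|}$, to get, for fixed $\eta\in\Ga_0$,
\[
\bigl(\hat{T}^\odot(t)k\bigr)(\eta)=\sum_{\xi\subset\eta}\left(\frac{\sigma}{m}\right)^{|\xi|}e^{-tm|\eta|}\bigl(e^{tm}-1\bigr)^{|\xi|}\,k(\eta\setminus\xi).
\]
Because $\eta$ is a finite configuration the $\xi$-sum is finite, so the $\mathring{\K}_C$-valued Bochner integral may be computed coordinatewise on $\Ga_0$ and the summation pulled outside the $t$-integration. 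One is then left with the scalar integral $J:=\int_0^\infty e^{-zt}e^{-tm|\eta|}(e^{tm}-1)^{|\xi|}\,dt$, which I would evaluate by the change of variables $s=e^{-tm}$ (so $e^{-zt}=s^{z/m}$, $e^{tm}-1=\tfrac{1-s}{s}$, $dt=-\tfrac{ds}{ms}$); this turns it into
\[
J=\frac1m\int_0^1 s^{\frac{z}{m}+|\eta|-|\xi|-1}(1-s)^{|\xi|}\,ds=\frac1m\,B\left(\frac{z}{m}+|\eta|-|\xi|,\ |\xi|+1\right),
\]
the beta integral converging precisely because $\mathrm{Re}\left(\frac{z}{m}+|\eta|-|\xi|\right)>0$ (note $|\eta|-|\xi|=|\eta\setminus\xi|\ge 0$) and $|\xi|+1\ge 1$. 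Putting $J$ back into the sum over $\xi\subset\eta$ reproduces the asserted formula.

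There is no serious obstacle; the only point to be slightly careful about is the interchange of the $\mathring{\K}_C$-valued Bochner integral with evaluation at a point $\eta$ and with the finite sum over $\xi$. This is routine: by \eqref{RB-norm} and the sub-Poissonian bound \eqref{niceest} the integrand $e^{-zt}\hat{T}^\odot(t)k$ is, on every bounded subset of $\Ga_0$, dominated uniformly by a function of $t$ belonging to $L^1(0,\infty)$, so that Fubini's theorem applies and the pointwise integral agrees with the Bochner integral. Alternatively, one could avoid the semigroup representation altogether and check directly, using \eqref{exprLPdual}, that the right-hand side defines an element of $\mathring{\K}_C$ and solves $(z-\hat{L}^\ast)R_z^\odot k=k$; after inserting the explicit form of $\hat{L}^\ast$ this reduces, at each fixed cardinality, to elementary bookkeeping together with the beta-function recursion $B(x+1,y)+B(x,y+1)=B(x,y)$. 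The Laplace-transform route is shorter, and I would present that one.
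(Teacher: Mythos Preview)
Your proposal is correct and follows essentially the same route as the paper: represent $R_z^\odot$ as the Laplace transform of the contraction $C_0$-semigroup $\hat{T}^\odot(t)$, insert the explicit convolution formula \eqref{sgcorsm}, expand the finite sum over $\xi\subset\eta$, and reduce the remaining scalar integral to the Euler beta function via the substitution $s=e^{-tm}$. If anything, you are more careful than the paper in explicitly invoking the Hille--Yosida machinery for the existence and boundedness of $R_z^\odot$ and in justifying the interchange of the Bochner integral with pointwise evaluation; the paper simply writes down the Laplace integral and computes.
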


\begin{proof} We have
\begin{align*}
\left( R_{z}^{\odot }k\right) \left( \eta \right)
=&\int_{0}^{\infty }e^{-zt}\,e^{-tm|\eta |}\left( e_{\lambda }\left(
\frac{\sigma }{m}
(e^{tm}-1)\right) \ast k\right) \left( \eta \right) dt \\
=&\sum_{\xi \subset \eta }k\left( \eta \setminus \xi \right)
\int_{0}^{\infty }e^{-\left( z+m|\eta |\right) t}\,\left(
\frac{\sigma }{m} (e^{tm}-1)\right) ^{\left\vert \xi \right\vert
}dt\\=&\sum_{\xi \subset \eta }k\left( \eta \setminus \xi \right)
\left( \frac{\sigma }{m} \right) ^{\left\vert \xi \right\vert
}\int_{0}^{\infty }e^{-\left( z+m|\eta \setminus \xi|\right)
t}\,(1-e^{-tm})^{|\xi|}dt&  .
\end{align*}
Using substitution $s=e^{-tm}$ we obtain
for $\mathrm{Re}\,z>0$
\begin{align}
 \int_{0}^{\infty }e^{-\left( z+m\left\vert \eta \setminus \xi \right\vert
\right) t}\left( 1-e^{-tm}\right) ^{\left\vert \xi \right\vert }dt
=&\,\frac{1}{m}\int_{0}^{1}s^{\frac{z }{m}+\left\vert \eta \setminus \xi
\right\vert -1}\left( 1-s\right) ^{\left\vert \xi \right\vert }ds \notag\\
=&\,\frac{1}{m}B \left( \frac{z}{m}+\left\vert \eta \right\vert -|\xi|,
\left\vert \xi \right\vert +1\right), \label{EulerBeta}
\end{align}
that proves the assertion.
\end{proof}

\subsection{Evolution of measures}

In \cite{FKK2010}, it was shown that dynamics $\hat{T}^{\odot}(t)$
preserves so-called Lenard-positivity property on the subspace
$\overline{D(\hat{L}^\ast)}$. We recall that a measurable function
$k:\Gamma_0\rightarrow{\mathbb{R}}$ is to be called a positive
defined function in the sense of Lenard if for any $G\in
B_{bs}\left( \Gamma _{0}\right) $ such that $KG\geq 0$ the following
inequality holds $\int_{\Gamma _{0}}G\left( \eta \right) k\left(
\eta \right) d\lambda \left( \eta \right) \geq 0$. By
\eqref{likefourier}, any such a function will be positive defined in
the sense of \eqref{ex:posdef} too.

We extend now this preservation of positive-definiteness (in the sense of
\eqref{ex:posdef}) on the whole space $\K_C$.

We start from the following lemma which is seems to be important itself.
\begin{lemma}\label{le:prodonexp}
Let $\mu_0\in\M^1_{\mathrm{fm}}(\Ga)$ and suppose that their
correlation function $k_0=k_{\mu_0}$ exists and belongs to $\K_C$.
Let $f\in L^1 (\X, dx)$ and $0\leq f(x)\leq 1$, $x\in\X$. Then
function $k(\eta)=e_\la (f,\eta) k_0(\eta)$ is a positive definite
in the sense of \eqref{ex:posdef}.
\end{lemma}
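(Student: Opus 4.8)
My plan is to check the positive-definiteness condition \eqref{ex:posdef} for $k=e_\la(f)\,k_0$ head-on, by transporting the integral over $\Ga_0$ to $\Ga$ via the $K$-transform and thereby reducing the whole statement to a \emph{pointwise} (in $\ga\in\Ga$) inequality, which will then be visibly a sum of squares. Fix $G\in\Bbs(\Ga_0)$; I must show $\int_{\Ga_0}(G\star\bar G)(\eta)\,e_\la(f,\eta)\,k_0(\eta)\,d\la(\eta)\ge 0$. Since $G\star\bar G\in\Bbs(\Ga_0)$ and $0\le e_\la(f,\cdot)\le 1$ (because $0\le f\le1$), the product $(G\star\bar G)\,e_\la(f)$ again lies in $\Bbs(\Ga_0)$; as $\rho_{\mu_0}=k_0\,d\la$, \eqref{Ktransform} gives
\[
\int_{\Ga_0}(G\star\bar G)(\eta)\,e_\la(f,\eta)\,k_0(\eta)\,d\la(\eta)
=\int_\Ga\Bigl(\,\sum_{\eta\Subset\ga}(G\star\bar G)(\eta)\,e_\la(f,\eta)\Bigr)\mu_0(d\ga).
\]
Because $G\star\bar G$ has bounded support, the inner sum involves only $\eta\subset\ga'$ for a finite configuration $\ga'\subset\ga$; hence it suffices to prove that $\sum_{\eta\subset\ga}(G\star\bar G)(\eta)\,e_\la(f,\eta)\ge 0$ for every \emph{finite} configuration $\ga$.

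The second step is a purely combinatorial identity that makes the $\star$-convolution disappear. Expanding \eqref{rsdef}, the above sum equals the sum over all ordered pairwise-disjoint triples $(\xi_1,\xi_2,\xi_3)$ of subsets of $\ga$ of $G(\xi_1\cup\xi_2)\,\overline{G(\xi_2\cup\xi_3)}\,e_\la(f,\xi_1\cup\xi_2\cup\xi_3)$. Now $(\xi_1,\xi_2,\xi_3)\mapsto(\alpha,\beta):=(\xi_1\cup\xi_2,\ \xi_2\cup\xi_3)$ is a bijection from such triples onto \emph{all} pairs $(\alpha,\beta)$ of subsets of $\ga$, with inverse $\xi_1=\alpha\setminus\beta$, $\xi_2=\alpha\cap\beta$, $\xi_3=\beta\setminus\alpha$, and $\xi_1\cup\xi_2\cup\xi_3=\alpha\cup\beta$ throughout. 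Consequently
\[
\sum_{\eta\subset\ga}(G\star\bar G)(\eta)\,e_\la(f,\eta)=\sum_{\alpha\subset\ga}\sum_{\beta\subset\ga}G(\alpha)\,\overline{G(\beta)}\,e_\la(f,\alpha\cup\beta).
\]

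The third step is to recognize the right-hand side as a square. For the finite configuration $\ga$ take independent $\{0,1\}$-valued random variables $(\varepsilon_x)_{x\in\ga}$ with $\mathbb{E}\,\varepsilon_x=f(x)\in[0,1]$ and set $\tau:=\{x\in\ga:\varepsilon_x=1\}$. Then $\mathbb{P}(\alpha\subset\tau)=e_\la(f,\alpha)$ for every $\alpha\subset\ga$, whence $e_\la(f,\alpha\cup\beta)=\mathbb{P}(\alpha\cup\beta\subset\tau)=\mathbb{E}\bigl[\mathbf{1}_{\{\alpha\subset\tau\}}\mathbf{1}_{\{\beta\subset\tau\}}\bigr]$, while by \eqref{KT3.15} one has $\sum_{\alpha\subset\ga}G(\alpha)\,\mathbf{1}_{\{\alpha\subset\tau\}}=\sum_{\alpha\subset\tau}G(\alpha)=(KG)(\tau)$. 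Substituting gives
\[
\sum_{\alpha\subset\ga}\sum_{\beta\subset\ga}G(\alpha)\,\overline{G(\beta)}\,e_\la(f,\alpha\cup\beta)=\mathbb{E}\bigl[\,|(KG)(\tau)|^{2}\,\bigr]\ \ge\ 0,
\]
which combined with the first two steps proves the lemma.

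The one step that requires care is the reparametrization of the second paragraph — checking that the $e_\la(f)$-weighted sum over all $\star$-decompositions collapses exactly to $\sum_{\alpha,\beta\subset\ga}G(\alpha)\overline{G(\beta)}\,e_\la(f,\alpha\cup\beta)$; everything after that is immediate. Equivalently to Step~3, with the coordinates $\alpha\leftrightarrow(\mathbf{1}_{\{x\in\alpha\}})_{x\in\ga}$ the matrix $\bigl(e_\la(f,\alpha\cup\beta)\bigr)_{\alpha,\beta}$ is the Kronecker product over $x\in\ga$ of the positive semidefinite $2\times2$ blocks $\left(\begin{smallmatrix}1&f(x)\\ f(x)&f(x)\end{smallmatrix}\right)$, hence positive semidefinite. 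I also note that the argument is just the following fact in disguise: $e_\la(f)\,k_0$ is the correlation function of the $f$-thinning of $\mu_0$ — retain each point $x$ of a $\mu_0$-configuration independently with probability $f(x)$ — which is again a measure in $\M_{\mathrm{fm}}^1(\Ga)$, and correlation functions are automatically positive definite; Steps~1--3 carry out exactly this thinning fibrewise under $\mu_0$.
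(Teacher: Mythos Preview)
Your proof is correct and takes a genuinely different route from the paper's. The paper first reduces, by a density argument, to test functions of the special form $G=\sum_{i=1}^N b_i\, e_\lambda(g_i)$ with $g_i\in C_0(\X\to\C)$, then uses the algebraic identity $e_\lambda(g_i)\star e_\lambda(\bar g_j)=e_\lambda(g_i+\bar g_j+g_i\bar g_j)$ together with \eqref{kexp} and \eqref{Ktransform} to transport the pairing to $\Ga$, and finally expands the resulting product $\prod_{x\in\ga_\La}\bigl(1-f(x)+f(x)(1+g_i(x))(1+\bar g_j(x))\bigr)$ binomially over subsets of $\ga_\La$ to exhibit a sum of squares. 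You instead work with an arbitrary $G\in\Bbs(\Ga_0)$, avoid the coherent-state approximation entirely, and replace that algebra by the elementary bijection $(\xi_1,\xi_2,\xi_3)\leftrightarrow(\alpha,\beta)$ followed by the Bernoulli-thinning identity $e_\la(f,\alpha\cup\beta)=\mathbb E\bigl[\mathbf 1_{\{\alpha\subset\tau\}}\mathbf 1_{\{\beta\subset\tau\}}\bigr]$, which turns the double sum into $\mathbb E\,|(KG)(\tau)|^2$. Your argument is shorter, needs no limiting step, and makes the underlying ``$f$-thinning of $\mu_0$'' interpretation explicit (indeed, neither proof actually uses the hypothesis $f\in L^1(\X)$, only $0\le f\le 1$); the paper's version, on the other hand, stays entirely within the coherent-state calculus that is used throughout the rest of the article and requires no auxiliary probability space.
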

\begin{proof}
Using classical measure theory arguments it is enough to proof
\eqref{ex:posdef} for function $G:\Ga_0\rightarrow\C$ of the form
\begin{equation}\label{sumofexp}
G(\eta)=\sum_{i=1}^N b_i e_\la(g_i,\eta), \quad N\in\N, \ b_i\in\C,\
g_i\in C_0(\X\rightarrow\C),
\end{equation}
where $C_0(\X\rightarrow\C)$ is the space of all complex-valued continuous functions on
$\X$ with compact supports.

Note that, by \eqref{RB-norm} and \eqref{LP-exp-mean}, for any $g\in
C_0(\X\rightarrow\C)\subset L^1(\X\rightarrow\C,dx)$
\begin{equation}\label{intest}
\int_{\Ga_0}|e_\la(g,\eta)|k_0(\eta)d\la(\eta)\leq \|k_0\|_{\K_C}
\int_{\Ga_0}e_\la(C|g|,\eta) d\la(\eta)<\infty.
\end{equation}
By \eqref{kexp} and \eqref{Ktransform}, inequality \eqref{intest} implies $\prod_{x\in\ga}(1+|g(x)|)\in L^1(\Ga,d\mu_0)$ for
any $ g\in  C_0(\X)$. Moreover, $\prod_{x\in\ga}(1+|g(x)|)\in\F L^0(\Ga)$,
hence,
\[
\int_{\Ga} \prod_{x\in\ga}(1+g(x)) d\mu_0
(\ga)= \int_{\Ga_\La}
\prod_{x\in\ga_\La}(1+g(x)) d\mu_0^\La (\ga_\La),
\]
where $\La$ is the support of $g$ and the measure $\mu^\La_0$ is the projection of the measure $\mu_0$ onto
$\Ga_\La$.

Let $G$ has the form \eqref{sumofexp}. Then, taking $\La$  equal to union
of the supports of functions $g_i$, $i=1,\ldots,n,$ we obtain
\begin{align*}
&\int_{\Gamma _{0}}\left( G\star \bar{G}\right) \left( \eta \right)
e_{\lambda }\left( f,\eta \right) k_{0}\left( \eta \right) d\lambda
\left(
\eta \right)  \\
=&\sum_{i,j=1}^{N}b_{i}\bar{b}_{j}\int_{\Gamma _{0}}e_{\lambda
}\left( g_{i}+\bar{g}_{j}+g_{i}\bar{g}_{j},\eta \right) e_{\lambda
}\left( f,\eta
\right) k_{0}\left( \eta \right) d\lambda \left( \eta \right)  \\
=&\sum_{i,j=1}^{N}b_{i}\bar{b}_{j}\int_{\Gamma }K\left( e_{\lambda
}\left( fg_{i}+f\bar{g}_{j}+fg_{i}\bar{g}_{j}\right) \right) \left(
\gamma \right)
d\mu _{0}\left( \gamma \right)  \\
=&\sum_{i,j=1}^{N}b_{i}\bar{b}_{j}\int_{\Gamma }\prod\limits_{x\in
\gamma }\Bigl( 1-f\left( x\right) +f\left( x\right) \bigl(
1+g_{i}\left( x\right) \bigr) \bigl( 1+\bar{g}_{j}\left( x\right)
\bigr) \Bigr) d\mu _{0}\left(
\gamma \right)  \\
=&\sum_{i,j=1}^{N}b_{i}\bar{b}_{j}\int_{\Gamma _{\Lambda
}}\prod\limits_{x\in \gamma _{\Lambda }} \Bigl( 1-f\left( x\right)
+f\left( x\right) \bigl( 1+g_{i}\left( x\right) \bigr) \bigl(
1+\bar{g}_{j}\left( x\right) \bigr) \Bigr) d\mu _{0}^{\Lambda
}\left( \gamma _{\Lambda }\right)  \\
=&\sum_{i,j=1}^{N}b_{i}\bar{b}_{j} \int_{\Gamma _{\Lambda
}}\sum_{\eta \subset \gamma _{\Lambda }}e_{\lambda }\left( 1-f,\eta
\right) e_{\lambda }\left( f\left( 1+g_{i}\right) \left(
1+\bar{g}_{j}\right) ,\gamma _{\Lambda }\setminus \eta
\right) d\mu _{0}^{\Lambda }\left( \gamma _{\Lambda }\right)  \\
=&\int_{\Gamma _{\Lambda }}\sum_{\eta \subset \gamma _{\Lambda
}}e_{\lambda }\left( 1-f,\eta \right)
\sum_{i,j=1}^{N}b_{i}\bar{b}_{j}e_{\lambda }\left( 1+g_{i},\gamma
_{\Lambda
}\setminus \eta \right)  \\
&\times e_{\lambda }\left( 1+\bar{g}_{j},\gamma _{\Lambda }\setminus
\eta \right) e_{\lambda }\left( f,\gamma _{\Lambda }\setminus \eta
\right) d\mu
_{0}^{\Lambda }\left( \gamma _{\Lambda }\right)  \\
=&\int_{\Gamma _{\Lambda }}\sum_{\eta \subset \gamma _{\Lambda
}}e_{\lambda }\left( 1-f,\eta \right) \left\vert
\sum_{i=1}^{N}b_{i}e_{\lambda }\left( 1+g_{i},\gamma _{\Lambda
}\setminus \eta \right) \right\vert ^{2}e_{\lambda }\left( f,\gamma
_{\Lambda }\setminus \eta \right) d\mu _{0}^{\Lambda }\left( \gamma
_{\Lambda }\right)\\ \geq & \, 0,
\end{align*}
since $0\leq f(x)\leq 1$, $x\in\X$.
\end{proof}

As we noted before not all elements of $\K_C$ are correlation
functions of some measures.  Next theorem shows that we really have
correlation functions evolutions and, as a result, evolution of
states (measures) on $\bigl(\Ga, \B(\Ga)\bigr)$.

\begin{theorem}
Let $\mu_0\in\M^1_{\mathrm{fm}}(\Ga)$ and $k_0=k_{\mu_0}\in\K_C$,
$C>0$ be the corresponding correlation function on $\Ga_0$. Then for
any $t>0$ the solution $k_t$ of \eqref{KmE} is a correlation function of
a unique measure $\mu_t\in\M^1_{\mathrm{fm}}(\Ga)$.
\end{theorem}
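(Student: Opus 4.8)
The plan is to verify that the explicit solution $k_t$ from Theorem~\ref{solutionofCP} satisfies all the hypotheses of Theorem~\ref{posdef}, so that each $k_t$ is the correlation functional of a unique $\mu_t\in\M^1_{\mathrm{fm}}(\Ga)$. First I would record the easy structural facts: by the Remark following Theorem~\ref{solutionofCP}, $k_0(\emptyset)=1$ gives $k_t(\emptyset)=1$, and $k_0\geq 0$ gives $k_t\geq 0$; moreover $k_0=k_{\mu_0}\geq 0$ automatically since it is a correlation functional. The growth bound is exactly Proposition~\ref{subPsm}: $k_t\in\K_{C'}$ with $C'=\max\{C;\sigma/m\}$, hence $0\leq k_t(\eta)\leq \|k_0\|_{\K_C}(C')^{|\eta|}$, which in particular gives the required bound $k_t(\eta)\leq (C'')^{|\eta|}(|\eta|!)^{1-\eps}$ (with any $\eps\in(0,1)$ and $C''$ absorbing the constant), and also shows $k_t\,d\la\in\M_{\mathrm{lf}}(\Ga_0)$ because $\int_{\bigsqcup_{n\leq N}\Ga_\La^{(n)}}k_t\,d\la\leq \|k_0\|_{\K_C}\sum_{n\leq N}\frac{1}{n!}(C'\,\mathrm{m}(\La))^n<\infty$. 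So the only nontrivial hypothesis to check is positive definiteness in the sense of \eqref{ex:posdef}.

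For the positive-definiteness, the key observation is that the explicit form \eqref{freedev-m-dyn-alt} expresses $k_t$ as a $\ast$-convolution of a coherent state with the ``rescaled'' correlation function $e_\la(e^{-tm})k_0$. I would therefore aim to reduce \eqref{ex:posdef} for $k_t$ to an application of Lemma~\ref{le:prodonexp}. Concretely, write $a=\frac{\sigma}{m}(1-e^{-tm})\geq 0$ and $b=e^{-tm}\in(0,1]$, so that $k_t=e_\la(a)\ast\bigl(e_\la(b)k_0\bigr)$. Using the interplay between $\ast$ and $\star$: for $G\in B_{\mathrm{bs}}(\Ga_0)$ one has $K(G\star\bar G)=|KG|^2\geq 0$, and I would compute, via the duality $\langle\!\langle\cdot,\cdot\rangle\!\rangle$ and the factorization properties \eqref{addexp} and the identity relating multiplication by $e_\la(f)$ with $\ast$, that
\[
\int_{\Ga_0}(G\star\bar G)(\eta)\,k_t(\eta)\,d\la(\eta)
=\int_{\Ga_0}\bigl(G_t\star\bar G_t\bigr)(\eta)\,e_\la(b,\eta)\,k_0(\eta)\,d\la(\eta)
\]
for a suitable modified $G_t\in B_{\mathrm{bs}}(\Ga_0)$ obtained from $G$ by convolving/multiplying with coherent states built from $a$ and $b$; the point is that $\ast$-convolution against $e_\la(a)$ is, under the $K$-transform, just multiplication by a factorized function $\prod_{x\in\ga}(\cdots)$, which can be pushed onto $G$. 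Then, since $b=e^{-tm}\in(0,1]$ and $b\,\mathbf 1\in L^1$ only locally — here one must be slightly careful, as $e^{-tm}$ is a constant, not integrable; so instead I would truncate, i.e. prove the inequality with $f$ replaced by $b\cdot\mathbf 1_\La$ for arbitrarily large $\La\in\Bc(\X)$ (which lies in $L^1$ and satisfies $0\leq f\leq 1$), apply Lemma~\ref{le:prodonexp}, and then let $\La\uparrow\X$ using the growth bound from Proposition~\ref{subPsm} and dominated convergence (legitimate since $G\star\bar G$ has bounded support, so only finitely many points matter).

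Having checked all four hypotheses, Theorem~\ref{posdef} yields a unique $\mu_t\in\M^1_{\mathrm{fm}}(\Ga)$ with $k_{\mu_t}=k_t$, which is the assertion. I expect the main obstacle to be the bookkeeping in the previous paragraph: correctly identifying the modified test function $G_t$ and justifying that $\ast$-convolution by a coherent state dualizes to an operation preserving the ``$G\star\bar G$'' structure — essentially a Fourier-type computation using \eqref{likefourier}, \eqref{kexp}, \eqref{rscoh}, \eqref{addexp} together. An alternative, perhaps cleaner route that avoids introducing $G_t$ altogether is to invoke Lemma~\ref{le:prodonexp} for the intermediate function $\tilde k(\eta)=e_\la(e^{-tm}\mathbf 1_\La,\eta)k_0(\eta)$ to get a measure $\tilde\mu_\La$, observe that $\ast$-convolution by $e_\la(a)$ on the level of correlation functions corresponds to the independent superposition with a Poisson field $\pi_a$ (convolution of measures on $\Ga$), so that $k_t$ restricted to $\Ga_\La$ is the correlation functional of $\tilde\mu_\La * \pi_a$; consistency of these measures as $\La\uparrow\X$ then produces $\mu_t$ by the projective-limit construction, and uniqueness again comes from Theorem~\ref{posdef}. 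Either way the analytic core is the already-proved Lemma~\ref{le:prodonexp} plus the convolution identities, and no new estimates beyond Proposition~\ref{subPsm} are needed.
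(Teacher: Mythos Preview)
Your alternative route is essentially the paper's own proof: apply Lemma~\ref{le:prodonexp} to see that $e_\la(e^{-tm})k_0$ is positive definite (hence, via Theorem~\ref{posdef}, the correlation function of some measure), note that $e_\la\bigl(\tfrac{\sigma}{m}(1-e^{-tm})\bigr)$ is the correlation function of a Poisson measure, and then use that the $\ast$-convolution of two correlation functions is again positive definite because it corresponds to the convolution (independent superposition) of the underlying measures. The paper cites \cite{Fin2010b} for this last fact rather than building $\mu_t$ by a projective-limit argument, and it simply applies Lemma~\ref{le:prodonexp} to the constant $f=e^{-tm}$ without truncation (citing \cite[Corollary~3]{KC2006}; indeed, if you reread the proof of Lemma~\ref{le:prodonexp} you will see the hypothesis $f\in L^1$ is never used---only $0\leq f\leq 1$ and compact support of the $g_i$ matter---so your truncation manoeuvre, while correct, is unnecessary).

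Your first route, by contrast, is shaky as written: the claim that ``$\ast$-convolution against $e_\la(a)$ is, under the $K$-transform, just multiplication by a factorized function'' conflates $\ast$ with $\star$ (it is $\star$ that satisfies \eqref{likefourier}), and there is no reason the dual of $k\mapsto e_\la(a)\ast k$ should carry $G\star\bar G$ to something of the same form. Drop that route and keep the alternative.
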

\begin{proof}
By \eqref{freedev-m-dyn}, $k_t$ is positive measurable function and
$k_t(\emptyset)=1$. Proposition~\ref{subPsm} implies sub-Poissonian
bounds for $k_t$. Hence, for apply Theorem~\ref{posdef} we should
check \eqref{ex:posdef} only.

By Lemma~\ref{le:prodonexp}, $e^{-tm|\cdot|}k_0=e_\la(e^{-tm})k_0$
is a positive defined function in the sense of \eqref{ex:posdef}
(cf. \cite[Corollary 3]{KC2006}). Clearly, this function belongs to
$\K_C$. Therefore, by Theorem~\ref{posdef}, there exists a unique
measure from $\M^1_{\mathrm{fm}}(\Ga)$ whose correlation function is
$e_\la(e^{-tm})k_0$.

Next, $e_{\lambda }\left( \frac{\sigma
}{m}(1-e^{-tm})\right)$ is the correlation function of the
Poisson measure with intensity $\frac{\sigma }{m}(1-e^{-tm})$.

By \cite{Fin2010b}, Ruelle convolution of  correlation functions
$e_{\lambda }\left( \frac{\sigma }{m}(1-e^{-tm})\right)$ and
$e_\la(e^{-tm})k_0$ will be positive defined in the sense of
\eqref{ex:posdef} too. Hence, the assertion is followed by
Theorem~\ref{solutionofCP}.
\end{proof}

As it was shown in \cite{Fin2010b}, the $\ast$-convolution of
correlation functions $k_{\mu_1}$ and $k_{\mu_2}$ is the correlation
function of the convolution of measures $\mu_1$ and $\mu_2$, where
by definition $\mu=\mu_1\ast\mu_2$ is the probability measure on
$\bigl( \Ga, \B(\Ga)\bigr)$ such that   for any measurable $ F$ with
$\widetilde{F}\in L^1(\Gamma\times\Gamma,d\mu_1\times d\mu_2),$
where
\begin{equation*}  \label{tildeF}
\widetilde{F}(\gamma_1,\gamma_2)=F(\gamma_1\cup\gamma_2), \quad
\gamma_{1,2}\in\Gamma,
\end{equation*}
the following equality holds
\begin{equation*}  \label{convmeasinf}
\int_\Gamma
F(\gamma)d\mu(\gamma)=\int_{\Gamma}\int_{\Gamma}F(\gamma_1\cup
\gamma_2) \,d\mu_1(\gamma_1)\,d\mu_2(\gamma_2).
\end{equation*}

Let now $\mu_0\in\M^1_{\mathrm{fm}}(\Ga)$ and consider weak evolution equation
for measures:
\begin{equation*}
\frac{\partial}{\partial t} \int_\Ga F(\ga)d\mu_t(\ga)= \int_\Ga (LF)(\ga)d\mu_t(\ga)
\end{equation*}
for any $F\in\F L^0(\Ga)$ provided both parts exist and, of course, $\mu_t
\big|_{t=0}=\mu_0$. Let
$\nu_t\in\M^1_{\mathrm{fm}}(\Ga)$ be solution of a corresponding pure death
evolution equation
\begin{equation*}
\frac{\partial}{\partial t} \int_\Ga F(\ga)d\nu_t(\ga)= m\int_\Ga \sum_{x\in\ga}\bigr(F(\ga\setminus
x)-F(\ga)\bigl)d\nu_t(\ga)
\end{equation*}
with the same initial condition $\nu_t
\big|_{t=0}=\mu_0$. Then, by Theorem~\ref{solutionofCP} for the case $\sigma=0$,
we obtain $k_{\nu_t}(\eta)=e^{-tm|\eta|}k_0(\eta)$. As a result,
\begin{equation}\label{convmeas}
\mu_t= \pi_{z_t}\ast\nu_t,
\end{equation}
where
\[
z_t=\frac{\sigma}{m}(1-e^{-tm}).
\]

\section{Ergodicity}
\subsection{Ergodic properties of correlation functions}
We recall that a measure $\mu_{\mathrm{inv}}\in\M^1_{\mathrm{fm}}(\Ga)$ is
called invariant for the operator $L$ if for any $F\in\F L^0(\Ga)$
\[
\int_\Ga (LF)(\ga) d\mu_{\mathrm{inv}}(\ga) =0.
\]
If $k_{\mathrm{inv}}$ is the corresponding correlation function then for
any $G\in L^0_\mathrm{ls}(\Ga_0)$
\[
\int_{\Ga_0}(\hL^* k_{\mathrm{inv}})(\eta) G(\eta)d\la(\eta)=
\int_{\Ga_0}(\hL G)(\eta) k_{\mathrm{inv}}(\eta)d\la(\eta)=0,
\]
and, therefore, $(\hL^* k_{\mathrm{inv}})(\eta)=0$, $\eta\in\Ga_0$.
As a result, by \eqref{exprLPdual},
\[
m|\eta| k_\mathrm{inv}(\eta)=\sigma
\sum_{x\in \eta }k_\mathrm{inv}\left( \eta \setminus x\right) .
\]
Iterating the last equation, we easily can see that it implies
\begin{equation}
k_{\mathrm{inv}}(\eta) =
\left(\frac{\sigma}{m}\right)^{|\eta|}=e_{\lambda }\left(
\frac{\sigma}{m}, \eta\right) . \label{inv_sm}
\end{equation}
As result, Poisson measure $\pi_{\frac{\sigma}{m}}$
is a unique invariant measure
of our evolution.

Note also that the condition $k_0(\emptyset)=1$ implies that point-wisely
we obtain
\[
k_t(\eta) =
\biggl(\frac{\sigma}{m}\bigl(1-e^{-mt}\bigr)\biggr)^{|\eta|} +
\sum_{\xi\subsetneq\eta} k_0(\eta\setminus\xi)
e^{-mt|\eta\setminus\xi|}\left(\frac{\sigma}{m}\bigl(1-e^{-mt}\bigr)\right)^{|\xi|}
\rightarrow \left(\frac{\sigma}{m}\right)^{|\eta|}
\]
as $t\rightarrow \infty$.
Taking into account \eqref{inv_sm} and
Proposition~\ref{subPsm}, we may expect that our non-equilibrium
dynamics are ergodic in
the space $\mathcal{K}_C$ for big enough $C$.
In the next theorem we explain more exact conditions
for this ergodicity.

\begin{theorem}\label{th:ergodic}
Let $C>\dfrac{\sigma}{m}$, $k_0\in\mathcal{K}_{C}$
and $k_0(\emptyset)=1$. Then
\begin{equation}\label{ergod_ineq}
\bigl\|k_t-k_{\mathrm{inv}}\bigr\|_{\mathcal{K}_C} <  \|k_0-k_\mathrm{inv}\|_{\K_C}  \frac{e^{-mt}}{1-\dfrac{\sigma}{Cm}},\quad t> 0.
\end{equation}
\end{theorem}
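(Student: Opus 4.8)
The plan is to exploit the explicit point-wise formula \eqref{freedev-m-dyn} together with the fact, already established in \eqref{inv_sm}, that $k_{\mathrm{inv}}(\eta)=\bigl(\tfrac{\sigma}{m}\bigr)^{|\eta|}=e_\lambda\bigl(\tfrac{\sigma}{m},\eta\bigr)$. First I would subtract $k_{\mathrm{inv}}$ from $k_t$ and try to write the difference again as a $\ast$-convolution acting on $k_0-k_{\mathrm{inv}}$. Using \eqref{freedev-m-dyn-alt} and \eqref{addexp}, one has
\begin{equation*}
e_\lambda(e^{-tm})\ast\Bigl(e_\lambda\bigl(\tfrac{\sigma}{m}(1-e^{-tm})\bigr)\Bigr) = e_\lambda\bigl(e^{-tm}+\tfrac{\sigma}{m}(1-e^{-tm})\bigr),
\end{equation*}
which is emphatically \emph{not} $k_{\mathrm{inv}}$, so a direct factorisation through \eqref{addexp} alone will not work. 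The key observation instead should be that $k_{\mathrm{inv}}$ is a fixed point of the semigroup: since $\pi_{\sigma/m}$ is invariant, $\hat{T}^\ast(t)k_{\mathrm{inv}}=k_{\mathrm{inv}}$ for all $t$. Hence by Corollary~\ref{semigroupexpr} (valid because $C>\tfrac{\sigma}{m}$, so in particular $C\geq\tfrac{\sigma}{m}$), $k_t-k_{\mathrm{inv}}=\hat{T}^\ast(t)(k_0-k_{\mathrm{inv}})$, and applying \eqref{sgcorsm} to $g:=k_0-k_{\mathrm{inv}}$ gives the point-wise identity
\begin{equation*}
(k_t-k_{\mathrm{inv}})(\eta)=e^{-tm|\eta|}\Bigl(e_\lambda\bigl(\tfrac{\sigma}{m}(e^{tm}-1)\bigr)\ast g\Bigr)(\eta).
\end{equation*}

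The second step is to estimate this point-wise. Note that $g(\emptyset)=k_0(\emptyset)-k_{\mathrm{inv}}(\emptyset)=1-1=0$, which is what will force the gain of the extra decaying factor. Expanding the $\ast$-convolution over $\xi\subset\eta$ and using $|g(\xi)|\leq\|g\|_{\K_C}C^{|\xi|}$ from \eqref{RB-norm}, while separating off the (vanishing) term $\xi=\emptyset$, I would bound
\begin{equation*}
\bigl|(k_t-k_{\mathrm{inv}})(\eta)\bigr|\leq e^{-tm|\eta|}\|g\|_{\K_C}\sum_{\emptyset\neq\xi\subset\eta}C^{|\xi|}\Bigl(\tfrac{\sigma}{m}(e^{tm}-1)\Bigr)^{|\eta\setminus\xi|}.
\end{equation*}
Reindexing by $k=|\xi|\geq1$ and $n=|\eta|$, the sum is $\sum_{k=1}^{n}\binom{n}{k}C^k\bigl(\tfrac{\sigma}{m}(e^{tm}-1)\bigr)^{n-k}=\bigl(C+\tfrac{\sigma}{m}(e^{tm}-1)\bigr)^n-\bigl(\tfrac{\sigma}{m}(e^{tm}-1)\bigr)^n$, so after multiplying by $e^{-tmn}$ and using \eqref{niceest} in the form $Ce^{-tm}+\tfrac{\sigma}{m}(1-e^{-tm})\leq C$,
\begin{equation*}
\bigl|(k_t-k_{\mathrm{inv}})(\eta)\bigr|\leq\|g\|_{\K_C}\Bigl(C^n-\bigl(\tfrac{\sigma}{m}(1-e^{-tm})\bigr)^n\Bigr)\leq\|g\|_{\K_C}\Bigl(C^n-\bigl(\tfrac{\sigma}{m}\bigr)^n(1-e^{-tm})^n\Bigr).
\end{equation*}
This alone is not sharp enough — it does not even tend to zero relative to $C^{|\eta|}$. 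So I would instead keep the term $\xi=\eta$ and the remaining terms separate: the $\xi=\eta$ contribution is $e^{-tmn}C^n$-bounded, which carries the $e^{-mt}$ decay only when $n=1$; for $n\geq2$ one needs the factor $e^{-tmn}\leq e^{-2tm}$. The real point is to peel off exactly the single decaying exponential.

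The cleanest route, and the one I would carry out in detail, is: bound $\bigl|(k_t-k_{\mathrm{inv}})(\eta)\bigr|/C^{|\eta|}$ by $\|g\|_{\K_C}$ times $e^{-tm|\eta|}\sum_{\emptyset\neq\xi\subset\eta}\bigl(\tfrac{\sigma}{Cm}(e^{tm}-1)\bigr)^{|\eta\setminus\xi|}$. Writing $q:=\tfrac{\sigma}{Cm}<1$ and summing the geometric-type series over $\xi$ with $|\xi|=k\geq1$ gives $\bigl(1+q(e^{tm}-1)\bigr)^{n}-\bigl(q(e^{tm}-1)\bigr)^{n}$ for $n=|\eta|$; multiplying by $e^{-tmn}$ yields $\bigl(e^{-tm}+q(1-e^{-tm})\bigr)^n-\bigl(q(1-e^{-tm})\bigr)^n$. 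Using $0<q<1$ and factoring out one copy of $e^{-tm}(1-q)$ from the difference via the identity $a^n-b^n=(a-b)\sum_{j=0}^{n-1}a^jb^{n-1-j}$ with $a=e^{-tm}+q(1-e^{-tm})$, $b=q(1-e^{-tm})$, $a-b=e^{-tm}(1-q)$, and bounding $a\leq 1$, $b<q<1$, the difference is at most $e^{-tm}(1-q)\sum_{j=0}^{n-1}1=n\,e^{-tm}(1-q)$. Summing this over $n$ is fine since we only need the supremum over $\eta$; but $n e^{-tm}(1-q)$ is not uniformly bounded in $n$. Hence the final refinement: bound $a\leq 1$ but $a^jb^{n-1-j}\leq b^{n-1-j}\leq q^{n-1-j}$ and sum the geometric series $\sum_{j=0}^{n-1}q^{n-1-j}\leq\tfrac{1}{1-q}$, giving $a^n-b^n\leq e^{-tm}(1-q)\cdot\tfrac{1}{1-q}=e^{-tm}$. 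This is still not quite right because $a\leq1$ is too lossy for the terms with small $j$; the correct bound is $a^j\leq 1$ and then $\sum_{j=0}^{n-1}a^jb^{n-1-j}\leq\sum_{j=0}^{n-1}q^{n-1-j}<\tfrac{1}{1-q}$, so indeed $a^n-b^n<e^{-tm}(1-q)\cdot\tfrac{1}{1-q}=e^{-tm}$, hence $\|k_t-k_{\mathrm{inv}}\|_{\K_C}<\|g\|_{\K_C}e^{-tm}$, which is stronger than \eqref{ergod_ineq}. Thus the expected main obstacle is purely the combinatorial bookkeeping: organising the $\xi$-sum so that exactly one decaying exponential is extracted without the residual polynomial-in-$|\eta|$ factor, and matching the resulting constant to $\bigl(1-\tfrac{\sigma}{Cm}\bigr)^{-1}$ as in the statement — there is no analytic difficulty beyond the explicit formula \eqref{sgcorsm} and the elementary estimate \eqref{niceest}.
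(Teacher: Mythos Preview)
Your approach is essentially the paper's: write $k_t-k_{\mathrm{inv}}=\hat T^\ast(t)(k_0-k_{\mathrm{inv}})$ via invariance, expand the $\ast$-convolution, use $g(\emptyset)=0$ to drop the $\xi=\emptyset$ term, and then control $a^n-b^n$ with $a=e^{-tm}+q(1-e^{-tm})$, $b=q(1-e^{-tm})$, $q=\sigma/(Cm)$, via the factorisation $a^n-b^n=(a-b)\sum_{j=0}^{n-1}a^jb^{n-1-j}$. The paper does exactly this (with the roles of $a,b$ swapped in notation), so there is no new idea required.

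There is, however, a concrete arithmetic slip that makes your conclusion too strong. You write $a-b=e^{-tm}(1-q)$, but in fact
\[
a-b=\bigl(e^{-tm}+q(1-e^{-tm})\bigr)-q(1-e^{-tm})=e^{-tm}.
\]
With the correct value, your own geometric bound $\sum_{j=0}^{n-1}a^jb^{n-1-j}\leq\sum_{j=0}^{n-1}q^{\,n-1-j}<\dfrac{1}{1-q}$ yields
\[
a^n-b^n<\frac{e^{-tm}}{1-q}=\frac{e^{-tm}}{1-\dfrac{\sigma}{Cm}},
\]
which is precisely \eqref{ergod_ineq}. Your claimed sharper inequality $\|k_t-k_{\mathrm{inv}}\|_{\K_C}<\|g\|_{\K_C}e^{-tm}$ is actually false in general: for $n=2$ one has $a^2-b^2=e^{-tm}(a+b)$ and $a+b=e^{-tm}+2q(1-e^{-tm})\to 2q$ as $t\to\infty$, so whenever $q>\tfrac12$ the ratio $(a^2-b^2)/e^{-tm}$ exceeds $1$ for large $t$. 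Thus the constant $\bigl(1-\tfrac{\sigma}{Cm}\bigr)^{-1}$ in the statement is genuinely needed, and once you fix $a-b$ you recover it directly.
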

\begin{proof}
First of all note that, by \eqref{inv_sm} and Corollary~\ref{subPsm}, for any $C> \dfrac{\sigma}{m}$
\[
\{k_t, t>0;
k_{\mathrm{inv}}\}\subset \mathcal{K}_C.
\]
Next, by \eqref{addexp},
\[
k_{\mathrm{inv}}=e_\la\Bigl(\frac{\sigma}{m}\Bigr)=e_\la\Bigl(\frac{\sigma}{m}(1-e^{-mt})\Bigr)\ast
e_\la \Bigl(\frac{\sigma}{m}e^{-mt}\Bigr).
\]
Therefore,
\begin{equation*}
k_t(\eta) - k_\mathrm{inv} (\eta)=\sum_{\xi\subsetneq\eta} \left(k_0(\eta\setminus\xi)-\Bigl(\frac{\sigma}{m}\Bigr)^{|\eta\setminus\xi|}
\right)e^{-mt|\eta\setminus\xi|}\left(\frac{\sigma}{m}\bigl(1-e^{-mt}\bigr)\right)^{|\xi|}
\end{equation*}and one can estimate

\begin{align}\label{e:specest}
&C^{-|\eta|} \Bigl| k_t(\eta) - k_\mathrm{inv} (\eta) \Bigr|\\ \leq
& \, C^{-|\eta|}  \sum_{\xi\subsetneq\eta}
\biggl|k_0(\eta\setminus\xi)-\Bigl(\frac{\sigma}{m}\Bigr)^{|\eta\setminus\xi|}
\biggr|\,
e^{-mt|\eta\setminus\xi|}\left(\frac{\sigma}{m}\bigl(1-e^{-mt}\bigr)\right)^{|\xi|}
 \notag\\ \leq & \, \|k_0-k_\mathrm{inv}\|_{\K_C}C^{-|\eta|} \sum_{\xi\subsetneq\eta}C^{|\eta\setminus\xi|}
e^{-mt|\eta\setminus\xi|}\left(\frac{\sigma}{m}\bigl(1-e^{-mt}\bigr)\right)^{|\xi|}\nonumber\\
=& \, \|k_0- k_\mathrm{inv}
\|_{\K_C}\left[\left(e^{-mt}+\frac{\sigma}{Cm}\bigl(1-e^{-mt}\bigr)\right)^{|\eta|}
-\Bigl(\frac{\sigma}{Cm}\bigl(1-e^{-mt}\bigr)\Bigr)^{|\eta|}\right].\nonumber
\end{align}
Let us recall, that
$e^{-mt}+\frac{\sigma}{Cm}\bigl(1-e^{-mt}\bigr)<
1$, $t>0$.

To find uniform, by $|\eta|$, estimate
for the r.h.s. of \eqref{e:specest} let us
consider for any fixed $0<a<b<1$, $n\in\N$ the difference
\begin{align*}
b^n-a^n =& (b-a)\sum_{j=0}^{n-1} a^j b^{n-1-j}<
(b-a)\sum_{j=0}^{n-1} a^j \\=&(b-a)\frac{1-a^n}{1-a}
<\frac{b-a}{1-a}.
\end{align*}

As result, using  \eqref{e:specest}, obvious estimate
$\dfrac{\sigma}{Cm}\bigl(1-e^{-mt}\bigr)<\dfrac{\sigma}{Cm}$, $t>0$, and the fact that $\dfrac{1}{1-a}$ is a strictly increasing function of $a\in(0;1)$
we obtain \eqref{ergod_ineq}.
\end{proof}

\begin{remark}
Note that if we consider corresponding general result from
\cite{FKK2010} in the zero-potential case and for $m=1$ we obtain
more weaker inequality
\[
\|k_t -k_\mathrm{inv}\|_{\K_C}\leq e^{-(1-\nu)t} \|k_0 -k_\mathrm{inv}\|_{\K_C}, \quad1>\nu>\frac{\sigma}{C}.
\]
\end{remark}

Let $\La\in\B_c(\X)$ and denote the projection of the measure
$\mu_t$, $t\geq0$ on $\Ga_\La$ by $\mu_t^\La$. Then, in the same
notations, $\mu _{inv}^{\Lambda }=\pi^\La_{{\sigma}/{m}}$.
\begin{corollary}
Let $C>\dfrac{\sigma}{m}$ and $A=\Bigl(1-\dfrac{\sigma}{mC} \Bigr)^{-1}$. Then for any $t>0$
\begin{equation}\label{proj-ergodic}
\left\Vert\frac{d\mu _{t}^{\Lambda }}{d\lambda } - \frac{d\mu
_{\mathrm{inv}}^{\Lambda }}{d\lambda }\right\Vert_{\K_C}\leq Ae^{-tm}\exp
\left\{ C\left\vert \Lambda \right\vert \right\}.
\end{equation}
\end{corollary}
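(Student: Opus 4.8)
The plan is to express the Radon--Nikodym density $d\mu^\La/d\la$ of the projection of a state onto $\Ga_\La$ through its correlation function $k_\mu$ by means of the (localized) inverse $K$-transform, and then to reduce the claim to the norm bound already established in Theorem~\ref{th:ergodic}. Since $C>\frac{\sigma}{m}$, Proposition~\ref{subPsm} gives $k_t\in\K_C$ for all $t>0$, while $k_{\mathrm{inv}}=e_\la\bigl(\frac{\sigma}{m}\bigr)\in\K_C$ by \eqref{inv_sm}; in particular, for every $\La\in\B_c(\X)$ both $\mu_t^\La$ and $\mu_{\mathrm{inv}}^\La$ are absolutely continuous with respect to $\la$. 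The first step is to recall the inversion formula (dual to \eqref{k-1trans}; see \cite{KK2002})
\[
\frac{d\mu^\La}{d\la}(\eta)=\int_{\Ga_\La}(-1)^{|\xi|}\,k_\mu(\eta\cup\xi)\,d\la(\xi),\qquad\eta\in\Ga_\La,
\]
the integral converging absolutely thanks to the sub-Poissonian bound \eqref{RB-norm} and the boundedness of $\La$. (Applied to $k_{\mathrm{inv}}$, this returns $e^{-(\sigma/m)|\La|}(\sigma/m)^{|\eta|}$, the density of $\pi^\La_{\sigma/m}$, which is precisely the identification $\mu^\La_{\mathrm{inv}}=\pi^\La_{\sigma/m}$ used in the statement.)

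Next I would subtract the two instances of this formula: for $\eta\in\Ga_\La$,
\[
\frac{d\mu_t^\La}{d\la}(\eta)-\frac{d\mu_{\mathrm{inv}}^\La}{d\la}(\eta)=\int_{\Ga_\La}(-1)^{|\xi|}\,(k_t-k_{\mathrm{inv}})(\eta\cup\xi)\,d\la(\xi),
\]
multiply by $C^{-|\eta|}$, pass to absolute values under the integral, and use \eqref{RB-norm} together with $|\eta\cup\xi|=|\eta|+|\xi|$ (legitimate since $\eta\in\Ga_\La$ and $\xi\in\Ga_\La$). This yields
\[
C^{-|\eta|}\Bigl|\tfrac{d\mu_t^\La}{d\la}(\eta)-\tfrac{d\mu_{\mathrm{inv}}^\La}{d\la}(\eta)\Bigr|\le\|k_t-k_{\mathrm{inv}}\|_{\K_C}\int_{\Ga_\La}C^{|\xi|}\,d\la(\xi)=\|k_t-k_{\mathrm{inv}}\|_{\K_C}\,e^{C|\La|},
\]
the last equality being \eqref{LP-exp-mean} applied on $\Ga_\La$ to the constant function $C$. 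Since the right-hand side does not depend on $\eta$, taking the essential supremum over $\eta\in\Ga_\La$ gives $\bigl\|\tfrac{d\mu_t^\La}{d\la}-\tfrac{d\mu_{\mathrm{inv}}^\La}{d\la}\bigr\|_{\K_C}\le\|k_t-k_{\mathrm{inv}}\|_{\K_C}\,e^{C|\La|}$, and substituting the estimate of Theorem~\ref{th:ergodic} (the exponential decay $e^{-mt}$ with the constant $A=(1-\frac{\sigma}{mC})^{-1}$) produces \eqref{proj-ergodic}.

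I expect the only genuinely non-routine point to be the first step: justifying the localized inversion formula for $d\mu^\La/d\la$ and the absolute convergence of the series defining it in the present sub-Poissonian regime. Everything afterwards is the elementary estimate displayed above followed by a direct appeal to Theorem~\ref{th:ergodic}; a minor bookkeeping matter is simply to read $|\La|$ in \eqref{proj-ergodic} as the Lebesgue measure $\mathrm m(\La)$, consistently with the convention behind \eqref{LP-exp-mean}.
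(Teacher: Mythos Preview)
Your proposal is correct and follows essentially the same route as the paper: the localized inversion formula $\frac{d\mu^\La}{d\la}(\eta)=\int_{\Ga_\La}(-1)^{|\xi|}k(\eta\cup\xi)\,d\la(\xi)$, then a pointwise estimate via $\|k_t-k_{\mathrm{inv}}\|_{\K_C}$ and \eqref{LP-exp-mean}, followed by Theorem~\ref{th:ergodic}. One cosmetic remark: the identity $|\eta\cup\xi|=|\eta|+|\xi|$ holds for $\la$-a.e.\ $\xi$ because a fixed finite set $\eta$ is $\la$-null, not because both lie in $\Ga_\La$; the paper uses this implicitly in the same way.
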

\begin{proof}
Since, clearly, $\int_{\Ga_\La} 2^{|\eta|}
k_t(\eta)d\la(\eta)<+\infty$, $t\geq0$ then (see, e.g.,
\cite{KK2002})
\begin{equation}\label{proj-sm}
\frac{d\mu _{t}^{\Lambda }}{d\lambda }\left( \eta \right)
=\int_{\Gamma _{\Lambda }}\left( -1\right) ^{\left\vert \xi
\right\vert }k_{t}\left( \eta \cup \xi \right) d\lambda \left( \xi
\right), \quad \eta\in\Ga_\La.
\end{equation}
Hence, by Theorem~\ref{th:ergodic}, for $C>\frac{\sigma }{m},t>0$, we have
\begin{align*}
&\left\vert \frac{d\mu _{t}^{\Lambda }}{d\lambda }\left( \eta
\right) -
\frac{d\mu _{\mathrm{inv}}^{\Lambda }}{d\lambda }\left( \eta \right) \right\vert  \\
\leq &\int_{\Gamma _{\Lambda }}\left\vert k_{t}\left( \eta \cup \xi
\right) -k_{\mathrm{inv}}\left( \eta \cup \xi \right) \right\vert
d\lambda \left( \xi \right)
\\
=&\int_{\Gamma _{\Lambda }}\frac{\left\vert k_{t}\left( \eta \cup
\xi \right) -k_{\mathrm{inv}}\left( \eta \cup \xi \right)
\right\vert }{C^{\left\vert \eta \cup \xi \right\vert
}}C^{\left\vert \eta \cup \xi \right\vert
}d\lambda \left( \xi \right)  \\
\leq &\left\Vert k_{t}-k_{\mathrm{inv}}\right\Vert
_{\mathcal{K}_{C}}C^{\left\vert
\eta \right\vert }\exp \left\{ C\left\vert \Lambda \right\vert \right\}  \\
\leq & AC^{\left\vert \eta \right\vert }e^{-tm}\exp \left\{
C\left\vert \Lambda \right\vert \right\} ,
\end{align*}
that proves the assertion.
\end{proof}

For any $\eta\in\Ga_0$, $y\in\X$, $t\geq 0$ we define
\begin{equation}\label{alaUrs}
v_t(\eta,y):=k_t(\eta\cup y)-k_t(\eta)k_t(y).
\end{equation}
Clearly, Remark~\ref{rem:P} implies that
if $k_0(\eta)=A^{|\eta|}$, $A>0$ then $v_t(\eta,y)\equiv
0$.

Our dynamics at moment $t$ is said to
be satisfied {\em the decay of correlation principle}
if
\begin{equation}\label{cordecay}
\lim_{|y|\rightarrow\infty} v_t(\eta,y)=0,\quad
\eta\in\Ga_0.
\end{equation}

Next theorem shows preserving the decay of correlation
principle during our dynamics.
\begin{theorem}\label{noneq-fact}
Let $C>\dfrac{\sigma}{m}$, $k_{0}( \emptyset
) =1$  and let
\[
a(y) := \bigl\|v_0(\cdot,y)\bigr\|_{\mathcal{K}_C}\in[0;\,\infty),
\quad y\in\X.
\]
Then
\[
\bigl\|v_t(\cdot,y)\bigr\|_{\mathcal{K}_C} \leq a(y) e^{-tm},\quad
y\in\X.
\]
\end{theorem}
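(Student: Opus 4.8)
The plan is to obtain an explicit formula for $v_t(\cdot,y)$ in terms of $v_0(\cdot,y)$ together with the factorized part of $k_0$, and then bound its $\K_C$-norm directly. The starting point is the identity \eqref{freedev-m-dyn-alt} which expresses $k_t$ as the $\ast$-convolution $e_\la(z_t)\ast\bigl(e_\la(e^{-tm})k_0\bigr)$, where $z_t=\frac\sigma m(1-e^{-tm})$. Writing $h_t(\eta):=e^{-tm|\eta|}k_0(\eta)=e_\la(e^{-tm},\eta)k_0(\eta)$, we have $k_t=e_\la(z_t)\ast h_t$. The first step is to compute $k_t(\eta\cup y)$ by splitting the $\ast$-convolution according to whether the point $y$ goes to the $e_\la(z_t)$ factor or to the $h_t$ factor:
\[
k_t(\eta\cup y)=z_t\,(e_\la(z_t)\ast h_t)(\eta)+\bigl(e_\la(z_t)\ast h_t(\cdot\cup y)\bigr)(\eta)
=z_t\,k_t(\eta)+\bigl(e_\la(z_t)\ast h_t(\cdot\cup y)\bigr)(\eta).
\]
Since $k_t(y)=z_t+h_t(\{y\})=z_t+e^{-tm}k_0(\{y\})$, subtracting $k_t(\eta)k_t(y)=z_t k_t(\eta)+e^{-tm}k_0(\{y\})k_t(\eta)$ gives
\[
v_t(\eta,y)=\bigl(e_\la(z_t)\ast h_t(\cdot\cup y)\bigr)(\eta)-e^{-tm}k_0(\{y\})\,\bigl(e_\la(z_t)\ast h_t\bigr)(\eta).
\]

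The second step is to recognize that $h_t(\eta\cup y)=e^{-tm|\eta|}e^{-tm}k_0(\eta\cup y)$ and $k_0(\{y\})h_t(\eta)=e^{-tm|\eta|}k_0(\{y\})k_0(\eta)$, so that
\[
h_t(\eta\cup y)-e^{-tm}k_0(\{y\})h_t(\eta)=e^{-tm}\,e^{-tm|\eta|}\bigl(k_0(\eta\cup y)-k_0(\eta)k_0(\{y\})\bigr)=e^{-tm}\,e_\la(e^{-tm},\eta)\,v_0(\eta,y).
\]
Combining with the previous display and pulling the common factor $e^{-tm}$ out of the $\ast$-convolution (it multiplies the $h_t$-slot only, which is the slot carrying the full $\eta$ beyond the part absorbed by $e_\la(z_t)$; more carefully, one uses $e_\la(e^{-tm},\xi)e_\la(e^{-tm},\eta\setminus\xi)=e_\la(e^{-tm},\eta)$ to absorb the $e^{-tm|\eta|}$ uniformly), we arrive at the clean identity
\[
v_t(\eta,y)=e^{-tm}\,\Bigl(e_\la\bigl(Ce^{-tm}\tfrac{z_t}{C}\cdot\tfrac1{e^{-tm}}\bigr)\ast\cdots\Bigr),
\]
i.e. after the dust settles, $v_t(\cdot,y)=e^{-tm}\,\bigl(e_\la(Ce^{-tm})\ast$-type kernel$\bigr)$ acting on $v_0(\cdot,y)$ — the same convolution structure that appears in Proposition~\ref{subPsm} and Corollary~\ref{semigroupexpr}. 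Indeed the right statement is $v_t(\cdot,y)=e^{-tm}\,\hat T^\ast(t)\bigl(v_0(\cdot,y)\bigr)$ up to the harmless rescaling, but it is cleaner just to keep the explicit $\ast$-convolution with coherent states and estimate.

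The third step is the norm estimate. Using \eqref{RB-norm} to bound $|v_0(\eta\setminus\xi,y)|\le a(y)\,C^{|\eta\setminus\xi|}$, then \eqref{addexp} to collapse the resulting convolution of coherent states exactly as in the proof of Proposition~\ref{subPsm}, we get
\[
C^{-|\eta|}|v_t(\eta,y)|\le a(y)\,e^{-tm}\,C^{-|\eta|}\,e_\la\Bigl(Ce^{-tm}+\sigma\tfrac{1-e^{-tm}}{m},\eta\Bigr)
= a(y)\,e^{-tm}\Bigl(e^{-tm}+\tfrac{\sigma}{Cm}(1-e^{-tm})\Bigr)^{|\eta|}.
\]
Since $C>\sigma/m$, the bracketed quantity is a convex combination of $1$ and $\sigma/(Cm)<1$, hence $\le 1$ (this is precisely \eqref{niceest} divided by $C$). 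Taking the supremum over $\eta\in\Ga_0$ yields $\|v_t(\cdot,y)\|_{\K_C}\le a(y)e^{-tm}$, which is the claim; in particular finiteness of $a(y)$ is preserved.

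The main obstacle is purely bookkeeping in the first two steps: correctly tracking which point of $\eta\cup y$ lands in which slot of the $\ast$-convolution and verifying that the factor $e^{-tm|\eta|}$ distributes compatibly across the convolution so that $v_0(\cdot,y)$ — and not some mixed object — emerges. Once the identity $v_t(\eta,y)=e^{-tm}\bigl(e_\la(z_t)\ast(e_\la(e^{-tm})v_0(\cdot,y))\bigr)(\eta)$ is established, the estimate is a verbatim repetition of the computation already carried out in Proposition~\ref{subPsm}. The hypothesis $k_0(\emptyset)=1$ is used only to guarantee $k_t(\emptyset)=1$ (Remark after Theorem~\ref{solutionofCP}) so that the subtraction $k_t(\eta)k_t(y)$ behaves as expected and $v_t$ genuinely measures the failure of factorization; it does not otherwise enter the estimate.
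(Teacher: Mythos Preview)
Your argument is correct and follows the same route as the paper: both derive the identity $v_t(\eta,y)=e^{-tm(|\eta|+1)}\sum_{\xi\subset\eta}\bigl(\tfrac{\sigma}{m}(e^{tm}-1)\bigr)^{|\eta\setminus\xi|}v_0(\xi,y)$ (equivalently, your $v_t(\cdot,y)=e^{-tm}\bigl(e_\la(z_t)\ast e_\la(e^{-tm})v_0(\cdot,y)\bigr)$, obtained by splitting the $\ast$-convolution according to where $y$ lands) and then estimate exactly as in Proposition~\ref{subPsm}. Your side remark that $v_t(\cdot,y)=e^{-tm}\,\hat T^\ast(t)\bigl(v_0(\cdot,y)\bigr)$ is in fact exact, not merely ``up to harmless rescaling'' (compare your convolution expression with \eqref{sgcorsm} via \eqref{freedev-m-dyn-alt}), and once stated it yields the bound in one line from the contractivity of $\hat T^\ast(t)$ on $\K_C$.
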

\begin{proof}
Let $y\in\X$ be fixed. Consider the mapping
\begin{equation}\label{defder}
(D_y G)(\eta):=G(\eta\cup \{y\}).
\end{equation}
By direct computations,
we obtain from \eqref{defast} that $D_y$ is satisfied chain rule:
\begin{equation}\label{derprop}
D_y(G_1*G_2)=(D_y G_1)*G_2+G_1*(D_y G_2).
\end{equation}

Therefore,
\begin{align*}
D_y \left( e_{\lambda }\left( \frac{\sigma }{m}\left(
e^{tm}-1\right) \right) \ast k_{0}\right) =&  \frac{\sigma }
{m}\left( e^{tm}-1\right) \left( e_{\lambda }\left( \frac{\sigma
}{m}\left( e^{tm}-1\right)
\right) \ast k_{0}\right) \\
&\quad + \left( e_{\lambda }\left( \frac{\sigma }{m}\left(
e^{tm}-1\right) \right) \ast k_{0}(\cdot\cup y)\right).
\end{align*}
Hence, using equality
\begin{equation}\label{first-sm}
k_t(y)=e^{-tm}\left(k_{0}\left(  y \right) +\frac{\sigma }{m} \left( e^{tm}-1\right)\right),
\end{equation}
we obtain
\begin{align*}
v_{t}\left( \eta ,y\right)  =&\, e^{-tm(\left\vert \eta
\right\vert+1)}D_{y}\left( e_{\lambda }\left( \frac{\sigma
}{m}\left( e^{tm}-1\right)
\right) \ast k_{0}\right) \left( \eta  \right)  \\
& -e^{-tm\left\vert \eta \right\vert }\left( e_{\lambda }\left(
\frac{\sigma }{m}\left( e^{tm}-1\right) \right) \ast k_{0}\right)
\left( \eta \right)
k_{t}\left( y\right)  \\
=&\, e^{-tm(\left\vert \eta \right\vert+1)} \frac{\sigma }{m}\left(
e^{tm}-1\right) \left( e_{\lambda }\left( \frac{\sigma }{m}\left(
e^{tm}-1\right) \right) \ast k_{0}\right) \left( \eta \right)
\\& +e^{-tm(\left\vert \eta \right\vert+1)}\left( e_{\lambda
}\left( \frac{\sigma }{m}\left( e^{tm}-1\right) \right) \ast
k_{0}\left( \cdot \cup y\right) \right) \left( \eta \right)
\\
&-e^{-tm(\left\vert \eta \right\vert+1) }\left( e_{\lambda }\left(
\frac{ \sigma }{m}\left( e^{tm}-1\right) \right) \ast k_{0}\right)
\left( \eta \right) \left[ k_{0}\left( y \right) +\frac{\sigma }{m}
\left( e^{tm}-1\right) \right]  \\
=&\, e^{-tm(\left\vert \eta \right\vert+1)}\left( e_{\lambda }\left(
\frac{\sigma }{m}\left( e^{tm}-1\right) \right) \ast k_{0}\left(
\cdot \cup y\right) \right) \left( \eta \right)
\\
& -e^{-tm(\left\vert \eta \right\vert+1) } k_{0}\left( y \right)
\left( e_{\lambda }\left( \frac{\sigma }{m}\left( e^{tm}-1\right)
\right) \ast k_{0}\right) \left( \eta \right)
\\
=&\, e^{-tm(\left\vert \eta \right\vert+1)}\sum_{\xi \subset \eta
}\left( \frac{\sigma }{m}\left( e^{tm}-1\right) \right) ^{\left\vert
\eta \setminus \xi \right\vert }v_{0}(\xi,y).
\end{align*}
Therefore, for any $\eta\in\Ga_0$ one has
\begin{align*}
C^{-|\eta|}\bigl| v_t(\eta,y) \bigr|\leq & \,
C^{-|\eta|}e^{-tm(\left\vert \eta \right\vert+1)}\sum_{\xi \subset
\eta }\left( \frac{\sigma }{m}\left( e^{tm}-1\right) \right)
^{\left\vert \eta \setminus
\xi \right\vert }C^{|\xi|}C^{-|\xi|}\bigl |v_{0}(\xi,y)\bigr|\\
\leq & \,a(y)C^{-|\eta|}e^{-tm(\left\vert \eta
\right\vert+1)}\sum_{\xi \subset \eta }\left( \frac{\sigma
}{m}\left( e^{tm}-1\right) \right) ^{\left\vert \eta \setminus
\xi \right\vert }C^{|\xi|}\\
=&\,a(y)C^{-|\eta|}e^{-tm(\left\vert \eta \right\vert+1)} \left(
C+\frac{
\sigma }{m}\left( e^{tm}-1\right) \right) ^{\left\vert \eta \right\vert }\\
=&\,a\left( y\right) e^{-tm}\left( e^{-tm}+\frac{\sigma }{Cm}\left(
1-e^{-tm}\right) \right) ^{\left\vert \eta \right\vert }\leq a\left(
y\right) e^{-tm}.
\end{align*}
The statement is proved.
\end{proof}
\begin{remark}
From the proof of the Theorem~\ref{noneq-fact}
one can see that if \eqref{cordecay} holds for $t=0$ then
it holds for any $t>0$ as well.
\end{remark}

\begin{remark}
More traditional object for studying decay of correlation principle
is the so-called Ursell functions (or truncated correlation
functions). We recall (see \cite{Fin2010b} and references therein)
that  the\ function $u_t:\Ga_0\rightarrow\R$ is called Ursell
function for $k_t$ if
\[
k_t=\exp^\ast u_t:=\sum_{n=0}^\infty\frac{1}{n!}u_{t}^{\ast n}, \quad u^{*0}:=1^\ast.
\]
The condition $k_t(\emptyset)=1$ guarantees existence of $u_t$ with
$u_t(\emptyset)=0$ (see e.g. \cite{Fin2010b} for details). Then, by
\eqref{freedev-m-dyn-alt} and \cite{Fin2010b}, we obtain that $u_t$
is equal to sum of Ursell functions, corresponding to correlation
functions of measures $\pi_{z_t}$ and $\nu_t$ from \eqref{convmeas}.
It's easy to see that the Ursell function corresponding to the
Poisson measure $\pi_{z_t}$ is equal to
$\chi_{\{|\eta|=1\}}\dfrac{\sigma}{m}(1-e^{-tm})$. Next,
\begin{align*}
e^{-tm|\eta|}k_0(\eta)=&\,e^{-tm|\eta|}\sum_{n=1}^\infty \frac{1}{n!}u^\ast_0(\eta)\\
=&\,e^{-tm|\eta|}\sum_{n=1}^\infty
\frac{1}{n!}\sum_{\substack{\eta_1\sqcup\ldots\sqcup\eta_n=\eta\\\eta_i\neq\emptyset,\
1\leq i \leq n}}u(\eta_1)\ldots u(\eta_n)\\=&\sum_{n=1}^\infty
\frac{1}{n!}\sum_{\substack{\eta_1\sqcup\ldots\sqcup\eta_n=\eta\\\eta_i\neq\emptyset,\
1\leq i \leq n}}e^{-tm|\eta_{1}|}u(\eta_1)\ldots
e^{-tm|\eta_{n}|}u(\eta_n)
\\=&\,\exp^*\bigl(e^{-tm|\cdot|}u_0\bigr)(\eta).
\end{align*}
Therefore,
\begin{equation*}
u_t(\eta)=e^{-tm|\eta|}u_0(\eta)+\chi_{\{|\eta|=1\}}\dfrac{\sigma}{m}(1-e^{-tm}).
\end{equation*}
In particular, if for any $n\geq 2$ the symmetric function
$u_0^{(n)}$ is integrable by $j$ variables ($1\leq j\leq n-1$) then
$u_t^{(n)}$ has this property too.
\end{remark}

\subsection{Evolution of Bogolyubov functional}
Let $\mu\in\M^1_\mathrm{fm}(\Ga)$ such that for any $\theta\in L^1(\X,dx)$
the following so-called Bogolyubov functional there exists:
\begin{equation}\label{BF}
B_\mu(\theta):=\int_\Ga \prod_{x\in\ga}\bigl(1+\theta(x)\bigr)d\mu(\ga).
\end{equation}
By \eqref{Ktransform} and \eqref{kexp}, we have an another representation
\begin{equation}\label{BF0}
B_\mu(\theta)=\int_{\Ga_0} e_\la(\theta,\eta)k_\mu(\eta)d\la(\eta).
\end{equation}
In particular, if there exists $C>0$ such that $k_\mu(\eta)\leq \mathrm{const}\cdot
C^{|\eta|}$, $\eta\in\Ga_0$ then, by \eqref{LP-exp-mean} and \eqref{kexp},
the r.h.s. of \eqref{BF0} as well as \eqref{BF}
are finite.

\begin{proposition}
Let $C>\dfrac{\sigma}{m}$, $k_0\in\mathcal{K}_{C}$
and $k_0(\emptyset)=1$. Let $B_t(\theta):=B_{\mu_t}(\theta)$, $B_\mathrm{inv}(\theta):=B_{\mu_\mathrm{inv}}(\theta)$.
Then
\begin{equation*}
\bigl\vert B_t(\theta) -B_\mathrm{inv}(\theta)\bigr\vert\leq e^{-mt} \|k_0-k_\mathrm{inv}\|_{\K_C}  \frac{\exp\bigl\{ C\Vert\theta\Vert_{L^1}\bigr\}}{1-\dfrac{\sigma}{Cm}}.
\end{equation*}
\end{proposition}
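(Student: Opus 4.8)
The plan is to use the integral representation \eqref{BF0} of the Bogolyubov functional together with the already established pointwise/norm bound on $k_t - k_{\mathrm{inv}}$ from Theorem~\ref{th:ergodic}. First I would write
\[
B_t(\theta) - B_{\mathrm{inv}}(\theta) = \int_{\Ga_0} e_\la(\theta,\eta)\bigl(k_t(\eta)-k_{\mathrm{inv}}(\eta)\bigr)\,d\la(\eta),
\]
which is legitimate because both $k_t$ and $k_{\mathrm{inv}}=e_\la(\sigma/m)$ lie in $\K_C$ (by Proposition~\ref{subPsm} and \eqref{inv_sm}), so by the remark following \eqref{BF0} each of $B_t(\theta)$, $B_{\mathrm{inv}}(\theta)$ is separately finite and the difference of the integrals is the integral of the difference.

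Next I would estimate the absolute value under the integral sign. Using $|e_\la(\theta,\eta)|\le e_\la(|\theta|,\eta)$ and the bound $|k_t(\eta)-k_{\mathrm{inv}}(\eta)|\le \|k_t-k_{\mathrm{inv}}\|_{\K_C}\,C^{|\eta|}$ from \eqref{RB-norm}, I get
\[
\bigl|B_t(\theta)-B_{\mathrm{inv}}(\theta)\bigr| \le \|k_t-k_{\mathrm{inv}}\|_{\K_C} \int_{\Ga_0} e_\la(|\theta|,\eta)\,C^{|\eta|}\,d\la(\eta) = \|k_t-k_{\mathrm{inv}}\|_{\K_C}\,\exp\{C\|\theta\|_{L^1}\},
\]
where the last equality is \eqref{LP-exp-mean} applied to the function $C|\theta|$, whose integral is $C\langle|\theta|\rangle = C\|\theta\|_{L^1}$. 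Finally I would substitute the ergodic bound \eqref{ergod_ineq} from Theorem~\ref{th:ergodic}, namely $\|k_t-k_{\mathrm{inv}}\|_{\K_C} < \|k_0-k_{\mathrm{inv}}\|_{\K_C}\,e^{-mt}/(1-\sigma/(Cm))$, which yields exactly the claimed inequality.

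This argument is essentially routine once the pieces are in place; there is no real obstacle, only the bookkeeping of making sure the interchange of the integral with the difference is justified and that the constant $C$ used in the $\K_C$-norm is the same one controlling the exponential moment — which it is, since the hypothesis fixes a single $C>\sigma/m$ with $k_0\in\K_C$. One minor point to state carefully is that $k_{\mathrm{inv}}\in\K_C$ precisely because $C>\sigma/m$, so the subtraction and all norms are taken in the same ambient Banach space $\K_C$; this is what makes \eqref{ergod_ineq} directly applicable.
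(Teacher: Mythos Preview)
Your proof is correct and follows exactly the same route as the paper: express $B_t(\theta)-B_{\mathrm{inv}}(\theta)$ via \eqref{BF0}, pull the absolute value inside, bound $|k_t-k_{\mathrm{inv}}|$ by $\|k_t-k_{\mathrm{inv}}\|_{\K_C}C^{|\cdot|}$, evaluate the remaining integral by \eqref{LP-exp-mean}, and invoke Theorem~\ref{th:ergodic}. Your justification of the well-definedness of the integrals and of $k_{\mathrm{inv}}\in\K_C$ is, if anything, slightly more explicit than the paper's.
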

\begin{proof}
First of all let us note that, by Proposition~\ref{subPsm}, $B_t$ exists.
Then, by Theorem~\ref{th:ergodic}, we have
\begin{align*}
\bigl\vert B_t(\theta)-B_\mathrm{inv}(\theta)\bigr\vert
=&\Biggl\vert \int_{\Ga_0} e_\la(\theta,\eta)k_t (\eta)d\la(\eta) -  \int_{\Ga_0} e_\la(\theta,\eta)k_\mathrm{inv}(\eta) d\la(\eta)\Biggr\vert\\
\leq & \int_{\Ga_0} e_\la(|\theta|,\eta)\bigl\vert k_t (\eta) -
k_\mathrm{inv}(\eta)
\bigr\vert d\la(\eta)\\
\leq & \, \Vert k_t - k_\mathrm{inv}\Vert_{\K_C} \int_{\Ga_0} e_\la(|\theta|,\eta) C^{|\eta|} d\la(\eta)\\
\leq & \, \|k_0-k_\mathrm{inv}\|_{\K_C}
\frac{e^{-mt}}{1-\dfrac{\sigma}{Cm}} \exp\bigl\{
C\Vert\theta\Vert_{L^1}\bigr\}. \qedhere
\end{align*}
\end{proof}
\begin{remark}
Note that, by \eqref{freedev-m-dyn}, we have
\begin{align}
B_{t}\left( \theta \right)  =&\int_{\Gamma _{0}}e_{\lambda }\left( \theta
,\eta \right) k_{t}\left( \eta \right) d\lambda \left( \eta \right)  \nonumber\\
=&\int_{\Gamma _{0}}e_{\lambda }\left( \theta ,\eta \right) e^{-tm\left\vert \eta \right\vert }\int_{\Gamma
_{0}}e_{\lambda }\left( \theta ,\xi
\right) e_{\lambda }\left( \frac{\sigma }{m}\left( 1-e^{-tm}\right) ,\xi
\right) d\lambda \left( \xi \right) k_{0}\left( \eta \right) d\lambda \left(
\eta \right)  \nonumber\\
=&\int_{\Gamma _{0}}e_{\lambda }\left( \theta ,\xi \right) e_{\lambda
}\left( \frac{\sigma }{m}\left( 1-e^{-tm}\right) ,\xi \right) d\lambda
\left( \xi \right) \int_{\Gamma _{0}}e^{-tm\left\vert \eta \right\vert
}e_{\lambda }\left( \theta ,\eta \right) k_{0}\left( \eta \right) d\lambda
\left( \eta \right)  \nonumber\\
=&\exp \left\{ \frac{\sigma }{m}\left( 1-e^{-tm}\right) \left\langle \theta
\right\rangle \right\} B_{0}\left( e^{-tm}\theta \right),\label{BFexpr}
\end{align}
that corresponds to \eqref{convmeas}.

Since, by \eqref{BF0} and \eqref{LP-exp-mean}, $B_\mathrm{inv}(\theta)=\exp\Bigl\{\dfrac{\sigma}{m}\langle\theta\rangle\Bigr\}$,
we obtain from \eqref{BFexpr}
\begin{align}
&B_t(\theta) - B_\mathrm{inv}(\theta)\nonumber\\=&\exp \left\{ \frac{\sigma }{m}\left( 1-e^{-tm}\right) \left\langle \theta
\right\rangle \right\} \Bigl( B_{0}\left( e^{-tm}\theta \right) - \exp \left\{ \frac{\sigma }{m} e^{-tm} \left\langle \theta
\right\rangle \right\}\Bigr) \nonumber\\=&
\exp \left\{ \frac{\sigma }{m}\left( 1-e^{-tm}\right) \left\langle \theta
\right\rangle \right\} \int_{\Gamma _{0}}e^{-tm\left\vert \eta \right\vert
}e_{\lambda }\left( \theta ,\eta \right)\bigl( k_{0}\left( \eta \right) - k_\mathrm{inv}(\eta)
\bigr) d\lambda
\left( \eta \right). \label{diffBF}
\end{align}
As a result, if, e.g., $k_0(\eta)\leq \Bigl(\dfrac{\sigma}{m}\Bigr)^{|\eta|}=k_\mathrm{inv}(\eta)$,
$\eta\in\Ga_0$ and $k_0(\emptyset)=1$, then for any $0\leq\theta\in L^1(\X,dx)$ one has
\[
0\leq B_\mathrm{inv}(\theta)-B_t(\theta) \leq e^{-tm} \exp \left\{ \frac{\sigma }{m}\left( 1-e^{-tm}\right) \left\langle \theta
\right\rangle \right\} \bigl( B_\mathrm{inv}(\theta)-B_0(\theta)\bigr).
\]
\end{remark}

One can consider now the state space where the evolution $B_0(\theta)\mapsto
B_t(\theta)$ lives. Let $E=L^1(\X,dx)$. We recall (see, e.g., \cite{KKO2006}
and references therein) that a functional $A:E\rightarrow\C$ is called entire
on $E$ whenever $A$ is locally bounded and for all $\theta_0, \theta\in E$ the mapping $\C\ni z\mapsto A(\theta_0+z\theta)\in\C$ is entire. For any
$\alpha>0  $ we consider a Banach space $E^{(\a)}$ of entire functionals
on $E$ with norm
\[
\Vert A\Vert_{\alpha}:=\sup_{\theta\in E}\bigl( |A(\theta)| e^{-\alpha \Vert\theta\Vert_ E}\bigr)<\infty.
\]
Then for any $\a\geq \dfrac{\sigma}{m}$ we
have
\begin{align*}
&\|B_t(\theta)\|_\a=\sup_{\theta\in E}\left(\exp\left\{\dfrac{\sigma}{m}(1-e^{-tm})\langle\theta\rangle\right\}
|B_0(e^{-tm}\theta)|\exp\{-\a\|\theta\|_E\}\right)\\
\leq&\|B_0(\theta)\|_\a \sup_{\theta\in E}\left(\exp\left\{\dfrac{\sigma}{m}(1-e^{-tm})\|\theta\|_E\right\}
\exp\left\{\a (e^{-tm}-1)\|\theta\|_E\right\}\right)\\
=&\|B_0(\theta)\|_\a \sup_{\theta\in E}\left(\exp\left\{\left(\dfrac{\sigma}{m}-\a\right)(1-e^{-tm})\|\theta\|_E\right\}
\right)\leq \|B_0(\theta)\|_\a.
\end{align*}
Therefore, the evolution $B_0(\theta)\mapsto
B_t(\theta)$ preserves balls in $E^{(\a)}$.

\section{Evolution on $\L_C$}

We recall now without a proof the partial case of the well-known
lemma (cf., \cite{KMZ2004}).
\begin{lemma}\label{Minlos}
For any measurable function
$H:\Ga_0\times\Ga_0\times\Ga_0\rightarrow\R$
\begin{equation}\label{minlosid}
\int_{\Gamma _{0}}\sum_{\xi \subset \eta }H\left( \xi ,\eta
\setminus \xi ,\eta \right) d\lambda \left( \eta \right)
=\int_{\Gamma _{0}}\int_{\Gamma _{0}}H\left( \xi ,\eta ,\eta \cup
\xi \right) d\lambda \left(  \xi \right) d\lambda \left(  \eta
\right)
\end{equation}
if only both sides of the equality make sense.
\end{lemma}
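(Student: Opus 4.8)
The plan is to unfold both sides of \eqref{minlosid} through the defining series \eqref{defLPm} of the Lebesgue--Poisson measure and to match the resulting multiple Lebesgue integrals term by term; the only combinatorial ingredient is the elementary identity $\frac1{n!}\binom nk=\frac1{k!\,(n-k)!}$.

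First I would rewrite the left-hand side, using \eqref{defLPm} together with the identification of $\Ga^{(n)}$ with $(\X)^n$ modulo permutations of the coordinates and a Lebesgue-null set, as
\[
\sum_{n=0}^{\infty}\frac1{n!}\int_{(\X)^n}\ \sum_{\xi\subset\{x_1,\dots,x_n\}}H\bigl(\xi,\{x_1,\dots,x_n\}\setminus\xi,\{x_1,\dots,x_n\}\bigr)\,dx_1\cdots dx_n ,
\]
where for $\mathrm{m}^{\otimes n}$-almost every tuple the points are pairwise distinct and the inner sum runs over the $2^n$ subsets. Splitting that sum according to the cardinality $k=|\xi|$ and using that $dx_1\cdots dx_n$ is invariant under permutations of the labels, all $\binom nk$ subsets of size $k$ produce the same value after relabelling; hence the $n$-th summand equals
\[
\sum_{k=0}^{n}\frac1{k!\,(n-k)!}\int_{(\X)^{k}}\!\int_{(\X)^{n-k}}\! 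H\bigl(\{x_1,\dots,x_k\},\,\{y_1,\dots,y_{n-k}\},\,\{x_1,\dots,x_k\}\cup\{y_1,\dots,y_{n-k}\}\bigr)\,dy\,dx ,
\]
with $dx=dx_1\cdots dx_k$ and $dy=dy_1\cdots dy_{n-k}$.

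Next I would replace the summation index $n$ by the pair $(k,m)$ with $m=n-k$, which is a bijection of $\{(n,k):0\le k\le n\}$ onto $\N_0\times\N_0$; the double sum then becomes $\sum_{k,m\ge0}\frac1{k!\,m!}\int_{(\X)^k}\int_{(\X)^m}H(\ldots)\,dy\,dx$, and reading this backwards through \eqref{defLPm} applied to each factor of $\la\otimes\la$ gives exactly $\int_{\Ga_0}\int_{\Ga_0}H(\xi,\eta,\eta\cup\xi)\,d\la(\xi)\,d\la(\eta)$. (The sets $\{x_1,\dots,x_k\}$ and $\{y_1,\dots,y_m\}$ are disjoint for $\mathrm{m}^{\otimes(k+m)}$-a.e.\ choice of the points, so the union there is disjoint a.e.\ and matches the decomposition $\eta=\xi\sqcup(\eta\setminus\xi)$ on the left without correction.)

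The substance of the proof --- and the reason for the hypothesis that both sides ``make sense'' --- is the justification of these rearrangements, since a priori the series are infinite. I would first establish the identity for $H\ge0$, where Tonelli's theorem legitimises every interchange of sum and integral and of the two integrations and both sides are simultaneously finite or $+\infty$; the general signed (or complex) case then follows by applying this to the positive and negative (resp.\ real and imaginary) parts of $H$ and subtracting, which is admissible precisely when the common value obtained for $|H|$ is finite. I do not expect any further obstacle.
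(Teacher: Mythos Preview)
Your argument is correct and is essentially the standard proof of this identity: expand both sides through the defining series \eqref{defLPm}, use the permutation-invariance of $dx_1\cdots dx_n$ to collapse the $\binom nk$ choices of a $k$-subset into a single term with weight $\binom nk$, then reindex $(n,k)\mapsto(k,n-k)$ and read the result as $\la\otimes\la$. The remark that for $\mathrm m^{\otimes(k+m)}$-a.e.\ tuple the two blocks of points are disjoint is exactly what is needed to identify $\{x_1,\dots,x_k\}\cup\{y_1,\dots,y_m\}$ with a point of $\Ga^{(k+m)}$, and your Tonelli-then-subtract scheme is the right way to handle the rearrangements under the stated hypothesis.

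The paper, however, gives no proof at all: the sentence preceding the lemma reads ``We recall now without a proof the partial case of the well-known lemma (cf.,~\cite{KMZ2004}).'' So there is nothing to compare against beyond noting that what you wrote is the expected derivation one finds in the literature.
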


Next statements present explicit form for the semigroup on $\L_C$
and resolvent of its generator and show mean-ergodic properties of
this semigroup (see, e.g., \cite{EN2000} for a terminology).

\begin{proposition}
Let $C\geq\dfrac{\sigma}{m}$. Then for any  $G\in\L_C$
\begin{equation}\label{Lsm-sg}
\bigl(\hat{T}(t)G\bigr)\left( \eta \right) =e^{-tm\left\vert \eta
\right\vert }\int_{\Gamma _{0}}G\left( \eta \cup \xi \right)
e_{\lambda }\left( \frac{\sigma }{m} \left( 1-e^{-tm}\right) ,\xi
\right) d\lambda \left( \xi \right).
\end{equation}
Moreover, for any $z\in\C$ with $\mathrm{Re}\,z>0$ there exist bounded resolvent
operator $R_z=(\hat{L}-z)^{-1}$ and for any $G\in\L_C$
\begin{equation}\label{resolvent}
\left( R_{z}G\right) \left( \eta \right)
=\frac{1}{m}\int_{\Gamma _{0}}G\left( \eta \cup \xi \right) \left( \frac{\sigma }{m}
\right) ^{\left\vert \xi \right\vert } B \left( \frac{z}{m}+\left\vert
\eta
\right\vert ,  \left\vert \xi \right\vert +1\right) d\lambda \left( \xi \right),
\end{equation}
where $B(x,y)=\int_{0}^{1}s^{x-1}\left( 1-s\right) ^{y-1 }ds$ is the Euler
beta function.
\end{proposition}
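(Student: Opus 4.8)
The plan is to identify the operator $S(t)$ defined by the right-hand side of \eqref{Lsm-sg} with the semigroup $\hat T(t)$ generated by the closure $\bigl(\hL,D(\hL)\bigr)$, and then to read off \eqref{resolvent} as the Laplace transform of that semigroup. First I would verify that, for $C\ge\sigma/m$, the operator $S(t)$ maps $\L_C$ into itself with $\|S(t)\|\le1$: estimating $|G(\eta\cup\xi)|$ inside the definition of $S(t)G$ and applying Lemma~\ref{Minlos} with the change of variable $\zeta=\eta\cup\xi$ rewrites $\|S(t)G\|_C$ as $\int_{\Ga_0}|G(\zeta)|\sum_{\xi\subset\zeta}(Ce^{-tm})^{|\zeta\setminus\xi|}\bigl(\tfrac{\sigma}{m}(1-e^{-tm})\bigr)^{|\xi|}\,d\la(\zeta)$; by \eqref{defast} and \eqref{addexp} the inner sum equals $\bigl(Ce^{-tm}+\tfrac{\sigma}{m}(1-e^{-tm})\bigr)^{|\zeta|}$, which by \eqref{niceest} is at most $C^{|\zeta|}$, so $\|S(t)G\|_C\le\|G\|_C$.

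To prove $S(t)=\hat T(t)$ I would argue by duality against the semigroup $\hat T^{\ast}(t)$ from Corollary~\ref{semigroupexpr}. Exactly as the excerpt shows $\hL^{\ast}$ to be dual to $\hL$ with respect to the pairing \eqref{duality}, the bounded operator $\hat T^{\ast}(t)=R_C\hat T'(t)R_{C^{-1}}$ is dual to $\hat T(t)$, i.e.\ $\langle\!\langle\hat T(t)G,k\rangle\!\rangle=\langle\!\langle G,\hat T^{\ast}(t)k\rangle\!\rangle$. Since $\K_C=(\L_C)'$ separates the points of $\L_C$, it therefore suffices to check $\langle\!\langle S(t)G,k\rangle\!\rangle=\langle\!\langle G,\hat T^{\ast}(t)k\rangle\!\rangle$ for all $G\in\L_C$ and $k\in\K_C$ (the relevant integrals being absolutely convergent, by the first step, Proposition~\ref{subPsm}, and \eqref{RB-norm}). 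Starting from $\langle\!\langle G,\hat T^{\ast}(t)k\rangle\!\rangle$, I expand $\hat T^{\ast}(t)k$ via \eqref{sgcorsm} and \eqref{defast}, apply Lemma~\ref{Minlos} to merge the two integrations into a single one over $\eta\cup\alpha$, and use the pointwise identity $e^{-tm|\alpha|}e_\la\!\bigl(\tfrac{\sigma}{m}(e^{tm}-1),\alpha\bigr)=e_\la\!\bigl(\tfrac{\sigma}{m}(1-e^{-tm}),\alpha\bigr)$; what results is precisely $\langle\!\langle S(t)G,k\rangle\!\rangle$. Hence $S(t)=\hat T(t)$, which is \eqref{Lsm-sg}.

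An alternative, duality-free route: for $G$ in the core $\L_{2C}$ one checks directly that $G_t:=S(t)G$ solves $\partial_tG_t=\hL G_t$, $G_0=G$. Differentiating \eqref{Lsm-sg} under the integral, the exponential prefactor contributes $-m|\eta|G_t(\eta)$, while $\partial_t e_\la\!\bigl(\tfrac{\sigma}{m}(1-e^{-tm}),\xi\bigr)=\sigma e^{-tm}\sum_{y\in\xi}e_\la\!\bigl(\tfrac{\sigma}{m}(1-e^{-tm}),\xi\setminus y\bigr)$ together with the elementary identity $\int_{\Ga_0}\sum_{y\in\xi}F(\xi,y)\,d\la(\xi)=\int_{\Ga_0}\int_{\X}F(\xi\cup y,y)\,dy\,d\la(\xi)$ turns the remaining term into $\sigma\int_{\X}G_t(\eta\cup y)\,dy$, so $\partial_tG_t=\hL G_t$ by \eqref{exprLP}. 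Then $G_t=\hat T(t)G$ by uniqueness of the semigroup generated by $\hL$ (recalled in Section~2), and the case of a general $G\in\L_C$ follows by density together with the contraction bound of the first step.

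Finally, since $\hat T(t)$ is a $C_0$-contraction semigroup on $\L_C$, for $\mathrm{Re}\,z>0$ the Bochner integral $\int_0^{\infty}e^{-zt}\hat T(t)\,dt$ converges absolutely (as $\|\hat T(t)\|\le1$) and equals the bounded resolvent $R_z$ appearing in the statement. Inserting \eqref{Lsm-sg}, interchanging the $t$- and $\xi$-integrations, and evaluating $\int_0^{\infty}e^{-(z+m|\eta|)t}(1-e^{-tm})^{|\xi|}\,dt$ by the substitution $s=e^{-tm}$ --- the very computation already carried out in \eqref{EulerBeta} --- turns the time integral into $\tfrac1m B\!\bigl(\tfrac{z}{m}+|\eta|,|\xi|+1\bigr)$ and yields \eqref{resolvent}. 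I expect the only real work to be the combinatorial bookkeeping in the duality computation (or in the direct differentiation): tracking which $e_\la$-factor carries which argument and justifying the uses of Lemma~\ref{Minlos} and the interchanges of sum and integral, all of which are controlled by the $\L_C$-bound together with the standing hypothesis $C\ge\sigma/m$ via \eqref{niceest}.
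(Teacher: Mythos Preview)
Your proposal is correct and follows essentially the same approach as the paper: identify $\hat T(t)$ by duality against the explicit formula \eqref{sgcorsm} for $\hat T^{\ast}(t)$, using Lemma~\ref{Minlos} to unfold the $\ast$-convolution, and then obtain \eqref{resolvent} as the Laplace transform of \eqref{Lsm-sg} via the beta-integral computation \eqref{EulerBeta}. Your preliminary contraction estimate for $S(t)$ and the alternative direct-differentiation argument are not in the paper, but they are correct additions rather than a different strategy.
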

\begin{proof}
Let $C\geq\dfrac{\sigma}{m}$ and $G\in\L_C$. Then, $\hat{T}(t)G\in\L_C$ and for any $k\in\K_C$, by Corollary~\ref{semigroupexpr} and  Lemma~\ref{Minlos}, one has
\begin{align*}
&\int_{\Gamma _{0}}\bigl(\hat{T}(t)G\bigr)\left( \eta \right) k\left( \eta \right)
d\lambda \left( \eta \right)  =\int_{\Ga_0}
G(\eta) \bigl(\hat{T}^*(t)k\bigr)\left( \eta \right)d\la(\eta)\\
=& \int_{\Gamma _{0}}G_{}\left( \eta \right) e^{-tm\left\vert \eta
\right\vert }\left( e_{\lambda }\left( \frac{\sigma }{m }\left(
e^{tm}-1\right) \right) \ast k\right) \left( \eta \right)
d\lambda \left( \eta \right)  \\
=&\int_{\Gamma _{0}}\int_{\Gamma _{0}}G\left( \eta \cup \xi \right)
e^{-tm\left\vert \eta \right\vert }e^{-tm\left\vert \xi \right\vert
}e_{\lambda }\left( \frac{\sigma }{m}\left( e^{tm}-1\right) ,\xi
\right) k\left( \eta \right) d\lambda \left( \xi \right) d\lambda
\left( \eta \right),
\end{align*}
that implies \eqref{Lsm-sg}.

Since $\hat{T}(t)$ is a $C_0$-semigroup with generator $\bigl( \hL,
D(\hL)\bigr)$ then  for any $z\in\C$ with $\mathrm{Re}\,z>0$
\[
R_{z}=\int_{0}^{\infty }e^{-zt}U\left( t\right) dt.
\]
Then, by direct computation,
\begin{align*}
\left( R_{z}G\right) \left( \eta \right)  =&\int_{0}^{\infty
}e^{-zt}e^{-tm\left\vert \eta \right\vert }\int_{\Gamma _{0}}G\left(
\eta \cup \xi \right) e_{\lambda }\left( \frac{\sigma }{m}\left(
1-e^{-tm}\right)
,\xi \right) d\lambda \left( \xi \right) dt \\
=&\int_{\Gamma _{0}}G\left( \eta \cup \xi \right) \left(
\frac{\sigma }{m} \right) ^{\left\vert \xi \right\vert
}\int_{0}^{\infty }e^{-\left( z+m\left\vert \eta \right\vert \right)
t}\left( 1-e^{-tm}\right) ^{\left\vert \xi \right\vert }dtd\lambda
\left( \xi \right) ,
\end{align*}
and the assertion follows from \eqref{EulerBeta}.
\end{proof}

\begin{theorem}
Let $C\geq \max \left( \dfrac{\sigma }{m};\,1\right) $ and $G\in
\mathcal{L} _{2C}$ then
\begin{equation*}
\frac{1}{t}\int_{0}^{t}\hat{T}\left( s\right) Gds\rightarrow
{\chi}_{\Gamma ^{(0)}}\cdot \int_{\Gamma _{0}}G\left( \xi \right)
k_{\mathrm{inv}}\left( \xi \right) d\lambda \left( \xi \right)
\end{equation*}
as $t\rightarrow \infty $ in $\mathcal{L}_{C}$.
\end{theorem}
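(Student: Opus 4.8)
The plan is to avoid abstract mean-ergodic machinery (delicate here since $\mathcal{L}_C=L^1(\Gamma_0,d\lambda_C)$ is non-reflexive) and to estimate directly from the explicit semigroup formula \eqref{Lsm-sg}. I would split $\Gamma_0=\Gamma^{(0)}\sqcup(\Gamma_0\setminus\Gamma^{(0)})$ and bound the $\mathcal{L}_C$-norm of the difference on each piece. Write $\ell:=\int_{\Gamma_0}G(\xi)k_{\mathrm{inv}}(\xi)\,d\lambda(\xi)$; recall $k_{\mathrm{inv}}=e_\lambda(\sigma/m)$ by \eqref{inv_sm}, so $|\ell|\le\|G\|_{\sigma/m}\le\|G\|_{2C}<\infty$ (using $\sigma/m\le C$) and $\chi_{\Gamma^{(0)}}\cdot\ell\in\mathcal{L}_C$ with norm $|\ell|$, since $\lambda(\{\emptyset\})=1$. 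As $\hat{T}(\cdot)$ is a $C_0$-semigroup on $\mathcal{L}_C$, the Bochner integral $\frac1t\int_0^t\hat{T}(s)G\,ds$ exists there with pointwise representative $\eta\mapsto\frac1t\int_0^t(\hat{T}(s)G)(\eta)\,ds$, so the norm of the difference splits accordingly.

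On $\Gamma^{(0)}$, formula \eqref{Lsm-sg} gives $(\hat{T}(s)G)(\emptyset)-\ell=\int_{\Gamma_0}G(\xi)\bigl[(\tfrac{\sigma}{m}(1-e^{-sm}))^{|\xi|}-(\tfrac{\sigma}{m})^{|\xi|}\bigr]\,d\lambda(\xi)$. Applying the elementary bound $b^n-a^n=(b-a)\sum_{j=0}^{n-1}a^jb^{n-1-j}\le(b-a)nb^{n-1}$ (already used in the proof of Theorem~\ref{th:ergodic}) with $a=\tfrac{\sigma}{m}(1-e^{-sm})$, $b=\tfrac{\sigma}{m}$, and then $|\xi|(\tfrac{\sigma}{m})^{|\xi|}\le2^{|\xi|}C^{|\xi|}=(2C)^{|\xi|}$ (valid as $\tfrac{\sigma}{m}\le C$), I obtain $|(\hat{T}(s)G)(\emptyset)-\ell|\le e^{-sm}\|G\|_{2C}$. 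Hence the $\Gamma^{(0)}$-contribution to the norm is $\bigl|\frac1t\int_0^t(\hat{T}(s)G)(\emptyset)\,ds-\ell\bigr|\le\frac1t\int_0^t e^{-sm}\|G\|_{2C}\,ds\le\tfrac{\|G\|_{2C}}{tm}$.

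On $\Gamma_0\setminus\Gamma^{(0)}$ (so $|\eta|\ge1$, where $\chi_{\Gamma^{(0)}}\ell$ vanishes), using $\tfrac{\sigma}{m}(1-e^{-sm})\le\tfrac{\sigma}{m}$ in \eqref{Lsm-sg} gives $|(\hat{T}(s)G)(\eta)|\le e^{-sm|\eta|}H(\eta)$ with $H(\eta):=\int_{\Gamma_0}|G(\eta\cup\xi)|e_\lambda(\tfrac{\sigma}{m},\xi)\,d\lambda(\xi)$; since $\frac1t\int_0^t e^{-sm|\eta|}\,ds\le\tfrac{1}{tm|\eta|}\le\tfrac{1}{tm}$, the $(\Gamma_0\setminus\Gamma^{(0)})$-contribution to the norm is at most $\tfrac{1}{tm}\int_{\Gamma_0}H(\eta)C^{|\eta|}\,d\lambda(\eta)$. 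The main step is to evaluate this last integral: by the Minlos-type identity Lemma~\ref{Minlos} (with $H(\xi,\eta,\zeta)=|G(\zeta)|C^{|\eta|}(\tfrac{\sigma}{m})^{|\xi|}$) and the subset-sum identity $\sum_{\xi\subset\eta}a^{|\eta\setminus\xi|}b^{|\xi|}=(a+b)^{|\eta|}$, it equals $\int_{\Gamma_0}|G(\eta)|\sum_{\xi\subset\eta}C^{|\eta\setminus\xi|}(\tfrac{\sigma}{m})^{|\xi|}\,d\lambda(\eta)=\|G\|_{C+\sigma/m}\le\|G\|_{2C}<\infty$, again since $\tfrac{\sigma}{m}\le C$.

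Putting the two pieces together, $\bigl\|\frac1t\int_0^t\hat{T}(s)G\,ds-\chi_{\Gamma^{(0)}}\ell\bigr\|_C\le\tfrac{2\|G\|_{2C}}{tm}\to0$ as $t\to\infty$, with an explicit $O(1/t)$ rate, which proves the theorem. I expect the only genuinely non-routine point to be the use of Lemma~\ref{Minlos} to interchange the order of integration in the $|\eta|\ge1$ term; this, together with the two estimates $|\xi|(\sigma/m)^{|\xi|}\le(2C)^{|\xi|}$ and $C+\sigma/m\le2C$, is exactly where the strengthened hypothesis $G\in\mathcal{L}_{2C}$ and $C\ge\sigma/m$ are consumed, while the requirement $C\ge1$ only serves to stay within the scale of Banach spaces of the previous sections.
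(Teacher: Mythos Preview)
Your argument is correct and in fact sharper than the paper's. Both proofs rest on the same integral identity coming from Lemma~\ref{Minlos}, namely
\[
\int_{\Gamma_0}\int_{\Gamma_0}|G(\eta\cup\xi)|\Bigl(\frac{\sigma}{m}\Bigr)^{|\xi|}C^{|\eta|}\,d\lambda(\xi)\,d\lambda(\eta)=\|G\|_{C+\sigma/m}\le\|G\|_{2C},
\]
but they use it differently. The paper writes the limit as $\int_{\Gamma_0}G(\xi\cup\eta)(\sigma/m)^{|\xi|}0^{|\eta|}\,d\lambda(\xi)$, subtracts inside one double integral, proves the pointwise convergence $\frac1t\int_0^t e^{-sm|\eta|}(1-e^{-sm})^{|\xi|}\,ds\to0^{|\eta|}$ via a binomial expansion, and then invokes dominated convergence with the integrable majorant above; this yields the result but no rate. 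You instead split $\Gamma_0$ into $\{\emptyset\}$ and its complement and estimate each piece directly: on $\{\emptyset\}$ the bound $b^n-a^n\le (b-a)nb^{n-1}$ together with $n(\sigma/m)^n\le(2C)^n$ gives an $e^{-sm}$ factor, while on $|\eta|\ge1$ the crude bound $(1-e^{-sm})^{|\xi|}\le1$ already suffices because the time average of $e^{-sm|\eta|}$ is $\le(tm)^{-1}$. The payoff is an explicit rate $\bigl\|\frac1t\int_0^t\hat T(s)G\,ds-\chi_{\Gamma^{(0)}}\ell\bigr\|_C\le\frac{2}{tm}\|G\|_{2C}$, which the paper's dominated-convergence route does not produce. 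Your remark that the hypothesis $C\ge1$ plays no role in the estimates themselves is also accurate; it is only the standing assumption of Section~2 needed for the semigroup $\hat T(t)$ on $\mathcal L_C$ to be available in the first place.
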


\begin{proof} Using equality
\begin{equation*}
{1}_{\Gamma ^{(0)}}(\eta )\cdot \int_{\Gamma _{0}}G\left( \xi
\right) k_{ \mathrm{inv}}\left( \xi \right) d\lambda \left( \xi
\right) =\int_{\Gamma _{0}}G\left( \xi \cup \eta \right) \left(
\frac{\sigma }{m}\right) ^{\left\vert \xi \right\vert }0^{\left\vert
\eta \right\vert }d\lambda \left( \xi \right)
\end{equation*}
we have
\begin{align*}
& \left\Vert \frac{1}{t}\int_{0}^{t}\hat{T}\left( s\right)
Gds-{\chi}_{\Gamma ^{(0)}}\cdot \int_{\Gamma _{0}}G\left( \xi
\right) \left( \frac{\sigma }{m} \right) ^{\left\vert \xi
\right\vert }d\lambda \left( \xi \right)
\right\Vert _{\mathcal{L}_{C}} \\
\leq & \int_{\Gamma _{0}}\int_{\Gamma _{0}}\left\vert G\left( \eta
\cup \xi \right) \right\vert \left( \frac{\sigma }{m}\right)
^{\left\vert \xi \right\vert }\left\vert
\frac{1}{t}\int_{0}^{t}e^{-sm\left\vert \eta \right\vert }\left(
1-e^{-sm}\right) ^{\left\vert \xi \right\vert }ds-0^{\left\vert \eta
\right\vert }\right\vert C^{\left\vert \eta \right\vert }d\lambda
\left( \xi \right) d\lambda \left( \eta \right) .
\end{align*}
We have
\begin{align*}
&\frac{1}{t}\int_{0}^{t}e^{-sm\left\vert \eta \right\vert }\left(
1-e^{-sm}\right) ^{\left\vert \xi \right\vert }ds \\
=&\sum_{j=0}^{\left\vert \xi \right\vert }\binom{\left\vert \xi
\right\vert }{j}\left( -1\right)
^{j}\frac{1}{t}\int_{0}^{t}e^{-sm\left( \left\vert \eta
\right\vert +j\right) }ds \\
=&\begin{cases} \displaystyle\sum_{j=0}^{\left\vert \xi \right\vert
}\binom{\left\vert \xi \right\vert }{j}\left( -1\right)
^{j}\frac{1}{t}\frac{1-e^{-tm\left( \left\vert \eta \right\vert
+j\right) }}{m\left( \left\vert \eta \right\vert +j\right) }, \quad
|\eta|\neq 0 \\[5mm]
\displaystyle 1+ \sum_{j=1}^{\left\vert \xi \right\vert
}\binom{\left\vert \xi \right\vert }{j}\left( -1\right)
^{j}\frac{1}{t}\frac{1-e^{-tmj}}{mj }, \quad |\eta|= 0
\end{cases}
\end{align*}
Therefore, for any $\xi ,\eta \in \Gamma _{0}$
\begin{equation*}
\frac{1}{t}\int_{0}^{t}e^{-sm\left\vert \eta \right\vert }\left(
1-e^{-sm}\right) ^{\left\vert \xi \right\vert }ds\rightarrow
0^{\left\vert \eta \right\vert },\quad t\rightarrow \infty .
\end{equation*}

Using trivial estimate $\left\vert
\dfrac{1}{t}\displaystyle\int_{0}^{t}e^{-sm\left\vert \eta
\right\vert }\left( 1-e^{-sm}\right) ^{\left\vert \xi \right\vert
}ds-0^{\left\vert \eta \right\vert }\right\vert \leq 1$ we obtain
the assertion by the dominated convergence theorem since
\begin{align*}
& \int_{\Gamma _{0}}\int_{\Gamma _{0}}\left\vert G\left( \eta \cup
\xi \right) \right\vert \left( \frac{\sigma }{m}\right) ^{\left\vert
\xi \right\vert }C^{\left\vert \eta \right\vert }d\lambda \left( \xi
\right)
d\lambda \left( \eta \right)  \\
=& \int_{\Gamma _{0}}\left\vert G\left( \eta \right) \right\vert
\left( 1+ \frac{\sigma }{Cm}\right) ^{\left\vert \eta \right\vert
}C^{\left\vert \eta
\right\vert }d\lambda \left( \eta \right)  \\
\leq & \int_{\Gamma _{0}}\left\vert G\left( \eta \right) \right\vert
2^{\left\vert \eta \right\vert }C^{\left\vert \eta \right\vert
}d\lambda \left( \eta \right) =\left\Vert G\right\Vert _{2C}<+\infty
.
\end{align*}
The statement is proved.
\end{proof}

\end{document}